\def\inmod#1{\allowbreak\mkern5mu({\operator@font mod}\,\,#1)}
\newcommand{\wt}{{\mathrm{wt}}}
\newcommand{\tr}{{\mathrm{Tr}}}
\newcommand{\gf}{{\mathrm{GF}}}
\newcommand{\C}{{\mathcal{C}}}
\newcommand{\bc}{{\mathbf{c}}}
\begin{document}

\title{A Class of Linear Codes with a Few Weights %\thanks{Grants or other notes
%about the article that should go on the front page should be
%placed here. General acknowledgments should be placed at the end of the article.}
}
%\subtitle{Do you have a subtitle?\\ If so, write it here}

%\titlerunning{Short form of title}        % if too long for running head

\author{Can Xiang$^1$ \and  Chunming Tang$^2$ \and Keqin Feng$^3$ \thanks{The research of K. Feng was supported by NSFC No. 11471178, 11571007 and the Tsinghua National Lab. for Information Science and Technology.}}

%\authorrunning{Short form of author list} % if too long for running head

\institute{C. Xiang \at
                College of Mathematics and Information Science, Guangzhou University, Guangzhou 510006, China \\
              \email{cxiangcxiang@hotmail.com} \and
              C. Tang \at
                School of Mathematics and Information, China West Normal University, Sichuan Nanchong, 637002, China \\
              \email{tangchunmingmath@163.com} \and
              K. Feng \at
                Department of Mathematical Sciences, Tsinghua University, Beijing, 100084, China \\
              \email{kfeng@math.tsinghua.edu.cn}
}

\date{Received: date / Accepted: date}
% The correct dates will be entered by the editor

\maketitle

\begin{abstract}
Linear codes have been an interesting subject of study for many years, as linear codes with few weights have applications in secrete sharing, authentication codes, association schemes, and strongly regular graphs. In this paper, a class of linear codes with a few weights over the finite field $\gf(p)$ are presented and their weight distributions are also determined, where $p$ is an odd prime.
Some of the linear codes obtained are optimal in the sense that they meet certain bounds on linear codes.

\keywords{Linear codes \and weight distribution \and weight enumerator \and secret sharing schemes}
% \PACS{PACS code1 \and PACS code2 \and more}
 \subclass{94B05 \and 94B15 \and 94B60}
% Mathematics Subject Classification 94A24 ， 94B35 ， 94B15 ， 94A55 ， 94B05 ， 94B15 ， 94B60
\end{abstract}

\section{Introduction}\label{sec-intro}

Throughout this paper, let $p$ be an odd prime and let $q=p^m$ for some positive integer $m$.
An $[n,\, k,\,d]$ code $\C$ over $\gf(p)$ is a $k$-dimensional subspace of $\gf(p)^n$ with minimum
(Hamming) distance $d$.  Let $A_i$ denote the number of codewords with Hamming weight $i$ in a code
$\C$ of length $n$. The {\em weight enumerator} of $\C$ is defined by
$
1+A_1z+A_2z^2+ \cdots + A_nz^n.
$
The {\em weight distribution} $(1,A_1,\ldots,A_n)$ is an important research topic in coding theory,
as it contains crucial information as to estimate the error correcting capability and the probability of
error detection and correction with respect to some algorithms.
A code $\C$ is said to be a $t$-weight code  if the number of nonzero
$A_i$ in the sequence $(A_1, A_2, \cdots, A_n)$ is equal to $t$.

Let $D=\{d_1, \,d_2, \,\ldots, \,d_n\} \subseteq \gf(q)$.
Let $\tr$ denote the trace function from $\gf(q)$ onto $\gf(p)$ throughout
this paper. We define a linear code of
length $n$ over $\gf(p)$ by
\begin{eqnarray}\label{eqn-maincode}
\C_{D}=\{(\tr(xd_1), \tr(xd_2), \ldots, \tr(xd_n)): x \in \gf(q)\},
\end{eqnarray}
and call $D$ the \emph{defining set} of this code $\C_{D}$. By definition, the
dimension of the code $\C_D$ is at most $m$.

This construction is generic in the sense that many classes of known codes
could be produced by properly selecting the defining set $D \subseteq \gf(q)$. If
the defining set $D$ is well chosen, some optimal linear codes with few weights can be obtained. This
construction technique was employed in
\cite{Ding15}, \cite{DingDing1},  \cite{DingDing2}, \cite{Ding09}, \cite{DLN}, \cite{DN07} and \cite{WDX2015}
for
obtaining linear codes with a few weights. For more details, we refer interested readers to \cite{Ding20152,Mesnager2015,ZLFH2015,TLQZH2015,TQH2015,QTH2015} and the references therein.

The purpose of this paper is to construct a class of linear codes over $\gf(p)$ with a few
nonzero weights using this generic construction method, and determine their weight distributions. Some of the linear
codes obtained in this paper are optimal in the sense that they meet some bounds on linear codes.
The linear codes with a few weights presented in this paper have applications also in
secret sharing \cite{ADHK,CDY05,YD06}, authentication codes \cite{CX05}, combinatorial designs and graph
theory \cite{CG84,CK85}, and association schemes \cite{CG84}, in addition to their applications in consumer electronics, communication and data storage systems.

The remainder of this paper is organized as follows. Section \ref{sec-pr} introduces some basic notations and results of group characters, Gauss
sums, exponential sums and cyclotomic fields which will be needed in subsequent sections. Section \ref{sec-main} presents a class of linear codes with a few weights and the proofs of their parameters are given in Section \ref{sec-proof}. Section \ref{sec-concluding} summarizes this paper.

\section {Preliminaries}\label{sec-pr}
In this section, we state some notations and basic facts on group characters, Gauss
sums, exponential sums and cyclotomic fields. These
results will be used later in this paper.

\subsection{Some notations fixed throughout this paper}

For convenience, we adopt the following notations unless otherwise stated in this paper.
\begin{description}
  \item[$\bullet$]  $p^*=(-1)^{(p-1)/2}p$.
  \item[$\bullet$]  $\zeta_p=e^{\frac{2\pi \sqrt{-1}}{p}}$, a primitive $p$-th
root of unity.
  \item[$\bullet$]  $L(x)=x^{p^2}+x.$
  \item[$\bullet$]  $\textup{Im}(L)=\{L(x):x\in \gf(q)\}.$
  \item[$\bullet$]  $x_b\in \gf(q)$ denotes a solution of the equation $L(x)=-b^p$
if $b\in \textup{Im}(L)$.
  \item[$\bullet$]  $m_p=m \bmod{p} \in \{0,1,...,p-1\}$,
  the least non-negative residue modulo $m$.
  \item[$\bullet$]  $\textup{SQ}$ and $\textup{N\textup{SQ}}$ denote the set of all  squares and nonsquares in $\gf(p)^{*}$, respectively.
  \item[$\bullet$]  $\eta$ and $\bar{\eta}$ are the quadratic characters of $\gf(q)^{*}$ and  $\gf(p)^{*}$, repsectively. We extend these quadratic characters
by letting $\eta(0)=0$ and $\bar{\eta}(0)=0$.
\end{description}

\subsection{Group characters and Gauss sums}

An {\em additive character} of $\gf(q)$ is a nonzero function $\chi$
from $\gf(q)$ to the set of nonzero complex numbers such that
$\chi(x+y)=\chi(x) \chi(y)$ for any pair $(x, y) \in \gf(q)^2$.
For each $b\in \gf(q)$, the function
\begin{eqnarray}\label{dfn-add}
\chi_b(c)=\zeta_p^{\tr(bc)} \ \ \mbox{ for all }
c\in\gf(q)
\end{eqnarray}
defines an additive character of $\gf(q)$. When $b=0$,
$\chi_0(c)=1 \mbox{ for all } c\in\gf(q),
$
and is called the {\em trivial additive character} of
$\gf(q)$. The character $\chi_1$ in (\ref{dfn-add}) is called the
{\em canonical additive character} of $\gf(q)$.
It is known that every additive character of $\gf(q)$ can be
written as $\chi_b(x)=\chi_1(bx)$ \cite[Theorem 5.7]{LN}.

The Gauss sum $G(\eta, \chi_1)$ over $\gf(q)$ is defined by
\begin{eqnarray}
G(\eta, \chi_1)=\sum_{c \in \gf(q)^*} \eta(c) \chi_1(c) = \sum_{c \in \gf(q)} \eta(c) \chi_1(c)
\end{eqnarray}
and
the Gauss sum $G(\bar{\eta}, \bar{\chi}_1)$ over $\gf(p)$ is defined by
\begin{eqnarray}
G(\bar{\eta}, \bar{\chi}_1)=\sum_{c \in \gf(p)^*} \bar{\eta}(c) \bar{\chi}_1(c)
= \sum_{c \in \gf(p)} \bar{\eta}(c) \bar{\chi}_1(c),
\end{eqnarray}
where $\bar{\chi}_1$ is the canonical additive characters of $\gf(p)$.

The following three lemmas are proved in \cite[Theorem 5.15 and Theorem 5.33]{LN} and \cite[lemma 7]{DingDing2}, respectively.

\begin{lemma}\label{lem-32A1}
With the symbols and notations above, we have
$$
G(\eta, \chi_1)=(-1)^{m-1} \sqrt{-1}^{(\frac{p-1}{2})^2 m} \sqrt{q}
$$
and
$$
G(\bar{\eta}, \bar{\chi}_1)= \sqrt{-1}^{(\frac{p-1}{2})^2 } \sqrt{p}=\sqrt{p*}.
$$
\end{lemma}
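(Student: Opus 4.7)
The plan is to prove the two displayed formulas sequentially: first establish the prime-field Gauss sum formula for $G(\bar\eta,\bar\chi_1)$, and then lift it to $\gf(q)$ using the Hasse--Davenport relation. Both identities are classical (indeed cited from \cite{LN}), so the role of the proof is to assemble the standard ingredients in order, being careful with the sign.

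For the prime-field identity, I would first obtain the absolute value by the standard squaring argument: writing
\begin{equation*}
G(\bar\eta,\bar\chi_1)^2 \;=\; \sum_{a,b\in \gf(p)^*} \bar\eta(ab)\,\bar\chi_1(a+b)
\end{equation*}
and substituting $b = ac$ gives $G(\bar\eta,\bar\chi_1)^2 = \bar\eta(-1)\, p = p^*$, where the evaluation $\bar\eta(-1) = (-1)^{(p-1)/2}$ is what produces the factor $(-1)^{(p-1)/2}$ absorbed into $p^*$. This already pins $G(\bar\eta,\bar\chi_1)$ down to $\pm\sqrt{p^*}$. To convert this into the stated closed form $\sqrt{-1}^{((p-1)/2)^2}\sqrt{p}$, I would note that $\sqrt{-1}^{((p-1)/2)^2}\sqrt{p}$ equals $\sqrt{p}$ when $p\equiv 1\pmod 4$ and $\sqrt{-1}\cdot\sqrt{p}=\sqrt{-p}$ when $p\equiv 3\pmod 4$, so the claim is equivalent to Gauss's sign determination $G(\bar\eta,\bar\chi_1)=+\sqrt{p^*}$.

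For the extension-field identity, I would invoke the Hasse--Davenport lifting relation, which asserts
\begin{equation*}
-G(\eta,\chi_1) \;=\; \bigl(-G(\bar\eta,\bar\chi_1)\bigr)^{m},
\end{equation*}
since $\eta$ is precisely the lift of $\bar\eta$ along the norm map and $\chi_1 = \bar\chi_1\circ\tr$. Rearranging gives $G(\eta,\chi_1) = (-1)^{m-1} G(\bar\eta,\bar\chi_1)^m$. Substituting the prime-field formula and using $\bigl(\sqrt{p}\bigr)^m = \sqrt{q}$ yields
\begin{equation*}
G(\eta,\chi_1) \;=\; (-1)^{m-1}\,\sqrt{-1}^{((p-1)/2)^2 m}\,\sqrt{q},
\end{equation*}
as claimed.

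The main obstacle is the sign in the prime-field formula: the squaring identity only determines $G(\bar\eta,\bar\chi_1)$ up to $\pm 1$, and fixing this sign is Gauss's celebrated theorem, which is nontrivial and typically handled by a separate analytic/combinatorial argument (e.g.\ via Mellin-type products or the evaluation of a specific theta-like sum). Since the statement cites \cite[Theorem 5.33]{LN}, I would simply quote the sign determination from there rather than reprove it. The Hasse--Davenport lift, while deep, is also taken as a black box from \cite{LN}, so the work on our side reduces to the algebraic manipulations tying the three ingredients together.
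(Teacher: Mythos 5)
Your proposal is correct: the paper itself gives no proof of this lemma, simply citing \cite[Theorem 5.15 and Theorem 5.33]{LN}, and your assembly of the standard ingredients (the squaring argument giving $G(\bar{\eta},\bar{\chi}_1)^2=p^*$, Gauss's sign determination quoted as a black box, and the Davenport--Hasse lifting relation $-G(\eta,\chi_1)=(-G(\bar{\eta},\bar{\chi}_1))^m$ together with $\bar{\eta}\circ N=\eta$) is exactly how the cited result is established in that reference. The algebraic steps you carry out yourself are all accurate, so there is nothing to correct.
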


\begin{lemma}\label{lem-32A2}
Let $\chi$ be a nontrivial additive character of $\gf(q)$ with $q$ odd, and let
$f(x)=a_2x^2+a_1x+a_0 \in \gf(q)[x]$ with $a_2 \ne 0$. Then
$$
\sum_{c \in \gf(q)} \chi(f(c)) = \chi(a_0-a_1^2(4a_2)^{-1}) \eta(a_2) G(\eta, \chi).
$$
\end{lemma}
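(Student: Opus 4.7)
The plan is to reduce the quadratic character sum to the Gauss sum $G(\eta,\chi)$ via a completion-of-the-square trick followed by a standard fiber-counting substitution. First, because $p$ is odd and $a_2 \ne 0$, I would complete the square and write
$$f(x) = a_2\!\left(x + \frac{a_1}{2a_2}\right)^2 + \left(a_0 - \frac{a_1^2}{4a_2}\right).$$
Applying the bijective shift $u = x + a_1/(2a_2)$ on $\gf(q)$ and using additivity of $\chi$ to pull out the constant factor $\chi\!\left(a_0 - a_1^2/(4a_2)\right)$, the problem collapses to evaluating the single quadratic sum $S := \sum_{u \in \gf(q)} \chi(a_2 u^2)$.

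The substantive step is to show $S = \eta(a_2)\,G(\eta,\chi)$. I would do this by isolating the $u=0$ term and reparametrising the rest via $v = a_2 u^2$. The crucial observation is that for each $v \in \gf(q)^*$, the fiber of $u \mapsto a_2 u^2$ over $v$ has size $1 + \eta(v/a_2)$; this is the standard device that injects the quadratic character into the computation. After expanding and splitting into two pieces, one uses $\sum_{v \in \gf(q)^*} \chi(v) = -1$ for nontrivial $\chi$, together with multiplicativity of $\eta$ and $\eta(a_2^{-1}) = \eta(a_2)$, to cancel the stray constant contributions and leave exactly $\eta(a_2)\,G(\eta,\chi)$. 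Combining with the factored-out constant yields the claimed identity.

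The only genuinely nontrivial ingredient is the fiber-counting identity $|\{u \in \gf(q) : u^2 = w\}| = 1 + \eta(w)$ for $w \in \gf(q)^*$; everything else is formal manipulation that I would execute quickly. Since the result is classical (attributed to \cite[Theorem 5.33]{LN}) and the argument is self-contained, I do not foresee any real obstacle. In particular, the explicit evaluation of $G(\eta,\chi_1)$ in Lemma~\ref{lem-32A1} is not required: the statement holds for an arbitrary nontrivial additive character, and the proof treats $G(\eta,\chi)$ as an opaque quantity.
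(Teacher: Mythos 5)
Your argument is correct: completing the square, shifting, and then counting fibers of $u\mapsto u^2$ via $1+\eta(w)$ to turn $\sum_u\chi(a_2u^2)$ into $\eta(a_2)G(\eta,\chi)$ is exactly the standard proof. The paper itself gives no proof of this lemma --- it simply cites \cite[Theorem 5.33]{LN} --- and your write-up reproduces that classical argument faithfully, including the correct observation that the explicit value of $G(\eta,\chi_1)$ from Lemma~\ref{lem-32A1} is not needed.
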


\begin{lemma}\label{lem-bothcharac}
If $m \ge 2$ is even, then $\eta(y)=1$ for each $y \in \gf(p)^*$.
If $m \ge 1$ is odd, then  $\eta(y)=\bar{\eta}(y)$ for each $y \in \gf(p)$.
\end{lemma}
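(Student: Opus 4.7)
The plan is to reduce everything to the multiplicative formula $\eta(y)=y^{(q-1)/2}$ for $y\in\gf(q)^{*}$ (and analogously $\bar\eta(y)=y^{(p-1)/2}$ for $y\in\gf(p)^{*}$), since the claim is really a statement about how the exponent $(q-1)/2$ behaves when restricted to $\gf(p)^{*}$. Note the zero case is trivial: whenever $y=0$ appears in the odd-$m$ assertion, both $\eta(0)$ and $\bar\eta(0)$ are defined to be $0$ by the conventions fixed in the preliminaries, so the identity holds automatically.

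First I would write $q-1=p^{m}-1=(p-1)(p^{m-1}+p^{m-2}+\cdots+p+1)$, so that for any $y\in\gf(p)^{*}$,
\[
\eta(y)=y^{(q-1)/2}=\bigl(y^{(p-1)/2}\bigr)^{N}=\bar\eta(y)^{N},\qquad N:=p^{m-1}+p^{m-2}+\cdots+p+1.
\]
Thus the entire statement reduces to determining the parity of the integer $N$, which is a sum of $m$ terms, each of which is odd because $p$ is odd.

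Hence $N\equiv m\pmod{2}$. If $m$ is even, $N$ is even and $\bar\eta(y)^{N}=(\pm 1)^{N}=1$, giving $\eta(y)=1$ as asserted. If $m$ is odd, $N$ is odd and $\bar\eta(y)^{N}=\bar\eta(y)$, giving $\eta(y)=\bar\eta(y)$ on $\gf(p)^{*}$; combined with the zero convention, this extends to all of $\gf(p)$.

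There is really no obstacle here: the only subtlety is making sure one does not accidentally conclude $\eta(y)=\bar\eta(y)$ in the even case as well (which fails whenever $y\in\textup{NSQ}$, since $y$ becomes a square in the extension $\gf(p^{m})$ of even degree). Keeping track of the parity of $N$ handles this cleanly, so the proof is essentially a one-line computation followed by a parity check.
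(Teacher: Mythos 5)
Your argument is correct: writing $(q-1)/2=\frac{p-1}{2}\,N$ with $N=1+p+\cdots+p^{m-1}$, applying Euler's criterion to get $\eta(y)=\bar\eta(y)^{N}$ on $\gf(p)^{*}$, and observing $N\equiv m\pmod 2$ (each $p^{i}$ being odd) yields exactly both assertions, with the $y=0$ case covered by the stated conventions $\eta(0)=\bar\eta(0)=0$. The paper itself gives no proof of this lemma --- it is simply quoted from \cite[Lemma 7]{DingDing2} --- and your computation is the standard argument one finds there, so nothing is missing; the only cosmetic point is that $\eta(y)=y^{(q-1)/2}$ should strictly be read as an identification of the complex value $\pm 1$ with the field element $y^{(q-1)/2}\in\{1,-1\}\subseteq\gf(q)$, which is harmless since $p$ is odd.
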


\subsection{A type of exponential sums}

For any $a$ and $b$ in $\gf(q)$, we define the following exponential sum
\begin{eqnarray*}\label{eqn-esum}
S(a,b)=\sum_{x \in \gf(q)} \chi_1\left(a x^{p+1}+bx\right)
\end{eqnarray*}
in this paper. To prove our main results, we need the values of the sum $S(a,b)$ and the help of a number of lemmas that are proved in \cite[Theorem 1, Theorem 2, Theorem 2.3]{Coulter}.

\begin{lemma}\label{lem-Coulter1}

Let $m$ be odd, $f(x)=a^{p}x^{p^2}+ax\in F_{q}[x]$ and $b\in \gf(q)$. Then $f(x)$ is a permutation polynomial over $\gf(q)$ and
\begin{eqnarray*}
S(a,b) =
\sqrt{p^*}^m \eta(a)\chi_1(-ax_{a,b}^{p+1}),
\end{eqnarray*}
where $p*=(-1)^{\frac{p-1}{2}}p$ and $x_{a,b}$ is the unique solution of the equation $f(x)=-b^p$.

Particularly, $S(a,0) = \sqrt{p^*}^m \eta(a)$.
\end{lemma}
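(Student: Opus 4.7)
My plan has three ingredients: establish that $f$ is a permutation polynomial of $\gf(q)$ for odd $m$, compute the special case $S(a,0)$ via a Gauss-sum argument, and reduce the general $S(a,b)$ to this special case by a ``completing the square'' substitution $x = y + x_{a,b}$.

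For the permutation property, since $f$ is $\gf(p)$-linear it suffices to show $\ker f = \{0\}$. For $x \neq 0$, $f(x)=0$ rearranges to $a^{p-1} x^{p^2-1} = -1$. Because $m$ is odd, $\gcd(p^2-1,\, p^m-1) = p-1$, so as $x$ ranges over $\gf(q)^*$ the quantity $a^{p-1} x^{p^2-1}$ sweeps out exactly the subgroup $H \le \gf(q)^*$ of $(p-1)$-th powers. But $|H| = (q-1)/(p-1) = 1 + p + \cdots + p^{m-1}$ is a sum of $m$ odd terms, hence odd, so $H$ has no element of order $2$; in particular $-1 \notin H$, and $f$ is a permutation.

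For the base case, $\gcd(p+1, p^m-1) = 2$ for odd $m$, so $x \mapsto x^{p+1}$ is $2$-to-$1$ from $\gf(q)^*$ onto $\textup{SQ}$. Writing the indicator of nonzero squares as $(1+\eta(y))/2$ and making the usual change of variables gives $S(a,0) = \eta(a)\, G(\eta,\chi_1)$, which by Lemma~\ref{lem-32A1} equals $\eta(a)\sqrt{p^*}^m$ (the factor $(-1)^{m-1}$ is $+1$ for odd $m$). For general $b$, write $x_0 = x_{a,b}$ and substitute $x = y+x_0$: expanding $(y+x_0)^{p+1}$ splits the exponent into $\tr(ay^{p+1})$, the cross term $\tr(ay^p x_0 + a x_0^p y)$, the constant $\tr(a x_0^{p+1} + b x_0)$, and the linear term $\tr(by)$. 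Applying the Frobenius identity $\tr(u)=\tr(u^p)$ to $\tr(a x_0^p y)$ rewrites the cross term as $\tr\!\bigl((a x_0 + a^p x_0^{p^2}) y^p\bigr) = \tr(f(x_0)\, y^p) = \tr(-b^p y^p) = -\tr(by)$, cancelling the linear term. Hence $S(a,b) = \chi_1(a x_0^{p+1} + b x_0)\, S(a,0)$. Finally, a second Frobenius step gives $\tr(b x_0) = \tr(b^p x_0^p) = -\tr((a^p x_0^{p^2} + a x_0) x_0^p) = -2\tr(a x_0^{p+1})$, so $\tr(2a x_0^{p+1} + b x_0) = 0$ and the prefactor collapses to $\chi_1(-a x_0^{p+1})$, matching the claim.

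The only delicate point is routing the Frobenius identity through the correct summand of the cross term, so that $f(x_0)$ (rather than $f(y)$) appears after factoring out $y^p$; this is precisely what allows the defining equation $f(x_0) = -b^p$ to produce the exact cancellation with $\tr(by)$. Everything else amounts to routine trace bookkeeping together with the two divisibility facts $\gcd(p^2-1, p^m-1) = p-1$ and $\gcd(p+1, p^m-1) = 2$ that hold for odd $m$.
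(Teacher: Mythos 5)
Your proof is correct, but note that the paper does not actually prove this lemma at all: it is quoted verbatim from Coulter (cited as \cite[Theorem 1, Theorem 2, Theorem 2.3]{Coulter}), so there is no in-paper argument to compare against. What you have supplied is a complete, self-contained derivation, and each step checks out: the permutation property follows correctly from $\gcd(p^2-1,p^m-1)=p-1$ for odd $m$ together with the observation that the subgroup of $(p-1)$-th powers has odd order $1+p+\cdots+p^{m-1}$ and hence omits $-1$; the evaluation $S(a,0)=\eta(a)G(\eta,\chi_1)=\eta(a)\sqrt{p^*}^{\,m}$ via the $2$-to-$1$ map $x\mapsto x^{p+1}$ and the indicator $(1+\eta(y))/2$ is the standard Gauss-sum computation (using $(-1)^{m-1}=1$ for odd $m$); and the shift $x=y+x_{a,b}$ with the Frobenius identity $\tr(u)=\tr(u^p)$ applied to the correct half of the cross term produces $\tr(f(x_{a,b})y^p)=-\tr(by)$, which is exactly the cancellation needed, after which the second Frobenius computation $\tr(bx_{a,b})=-2\tr(ax_{a,b}^{p+1})$ collapses the constant to $\chi_1(-ax_{a,b}^{p+1})$. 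This is essentially Coulter's own strategy (reduce to $S(a,0)$ by completing the square in the linearized sense), so you have not found a genuinely new route, but you have made the paper's black-box citation fully verifiable. The only cosmetic point worth flagging is that the argument, like the lemma itself, tacitly assumes $a\neq 0$ (otherwise $\eta(a)=0$ while $S(0,0)=q$), which is implicit in the way $x_{a,b}$ is defined.
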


\begin{lemma}\label{lem-Coulter2}

Let $m$ be even, $f(x)=a^{p}x^{p^2}+ax\in F_{q}[x]$ and $b\in \gf(q)$. There are two cases.
\begin{enumerate}
  \item If $a^{\frac{q-1}{p+1}}\neq (-1)^{\frac{m}{2}}$, then $f(x)$ is a permutation polynomial over $\gf(q)$. Let $x_{a,b}$ be the unique solution of the equation $f(x)=-b^p$. Then
      $$
      S(a,b)=(-1)^{\frac{m}{2}}p^{m/2}\chi_1(-ax_{a,b}^{p+1})
      $$
  \item If $a^{\frac{q-1}{p+1}}=(-1)^{\frac{m}{2}}$, then $f(x)$ is not a permutation polynomial over $\gf(q)$. We have $S(a,b)=0$ unless the the equation $f(x)=-b^p$ is solvable. If this equation is solvable, with solution $x_{a,b}$ say, then

      $$
      S(a,b)=-(-1)^{\frac{m}{2}}p^{m/2+1}\chi_1(-ax_{a,b}^{p+1}).
      $$
\end{enumerate}

Particularly,
\begin{eqnarray*}
S(a,0) =
\left\{ \begin{array}{ll}
(-1)^{\frac{m}{2}}p^{\frac{m}{2}}& \mbox{ if $a^{\frac{q-1}{p+1}}\neq (-1)^{\frac{m}{2}}$,} \\
(-1)^{\frac{m}{2}+1}p^{\frac{m}{2}+1}& \mbox{ if $a^{\frac{q-1}{p+1}} = (-1)^{\frac{m}{2}}$.}
\end{array}
\right.
\end{eqnarray*}
\end{lemma}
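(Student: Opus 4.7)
My plan is to parallel the structure of Lemma~\ref{lem-Coulter1} while coping with the new feature that, for even $m$, the $\gf(p)$-linearized polynomial $f(x)=a^{p}x^{p^{2}}+ax$ need not permute $\gf(q)$. The overall strategy has three stages: \textbf{(i)} a ``completing-the-square'' reduction $S(a,b)=\chi_{1}(-ax_{a,b}^{p+1})\,S(a,0)$ whenever $f(x)=-b^{p}$ is solvable; \textbf{(ii)} a magnitude computation that also produces the dichotomy on $a$; and \textbf{(iii)} a sign determination for $S(a,0)$, which is the real obstacle.

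For \textbf{(i)}, substitute $x=y+x_{a,b}$ and expand $a(y+x_{a,b})^{p+1}+b(y+x_{a,b})$. Using the Frobenius invariance $\tr(u)=\tr(u^{p})$, the $y$-linear part becomes $\tr\!\bigl(y\cdot(a^{p^{m-1}}x_{a,b}^{p^{m-1}}+ax_{a,b}^{p}+b)\bigr)$, and the bracket is precisely the $p^{m-1}$-th root of $f(x_{a,b})=-b^{p}$, hence vanishes. The same kind of Frobenius-swap, combined with $b=-(ax_{a,b}^{p}+a^{p^{m-1}}x_{a,b}^{p^{m-1}})$, reduces the constant term $\tr(ax_{a,b}^{p+1}+bx_{a,b})$ to $\tr(-ax_{a,b}^{p+1})$. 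In case~(2) one must verify this is independent of which solution $x_{a,b}$ is picked: for $z\in\ker f$ the identity $az^{p}=-a^{p^{m-1}}z^{p^{m-1}}$ forces both $\tr(az^{p+1})=0$ (since then $\tr(az^{p+1})=-\tr(az^{p+1})$) and $\tr(ax_{a,b}^{p}z+ax_{a,b}z^{p})=0$ by direct trace calculation.

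For \textbf{(ii)}, the squaring trick $|S(a,b)|^{2}=\sum_{y,z}\chi_{1}(a((y+z)^{p+1}-y^{p+1})+bz)$ together with $(y+z)^{p+1}-y^{p+1}=y^{p}z+yz^{p}+z^{p+1}$ and a Frobenius swap in the inner $y$-sum yields
\begin{equation*}
|S(a,b)|^{2}=q\sum_{z\in\ker f}\chi_{1}(az^{p+1}+bz),
\end{equation*}
the inner $y$-sum being $q$ exactly when $f(z)=a^{p}z^{p^{2}}+az=0$ and $0$ otherwise. Nonzero elements of $\ker f$ satisfy $z^{p^{2}-1}=-a^{1-p}$, so $\ker f$ is either $\{0\}$ or a full coset $z_{0}\,\gf(p^{2})$ of size $p^{2}$; chasing the $(q-1)/(p^{2}-1)$-th power condition translates exactly into the dichotomy $a^{(q-1)/(p+1)}\neq(-1)^{m/2}$ versus $a^{(q-1)/(p+1)}=(-1)^{m/2}$. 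In the non-permutation case, if $-b^{p}\in\mathrm{Im}(f)$ the same kernel identities used in (i) force $\tr(az^{p+1}+bz)=0$ for every $z\in\ker f$, so $|S(a,b)|^{2}=qp^{2}$; if $-b^{p}\notin\mathrm{Im}(f)$, the inner sum becomes a non-trivial additive-character sum on an affine coset and vanishes, giving $S(a,b)=0$.

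What remains, and is the delicate step, is \textbf{(iii)}: the sign of $S(a,0)$. Stage (ii) forces $S(a,0)=\varepsilon\,p^{m/2}$ or $\varepsilon\,p^{m/2+1}$ for some $\varepsilon=\pm1$. I would identify $\varepsilon$ by partitioning $\gf(q)^{*}$ into the $(p+1)$-to-$1$ fibres of $x\mapsto x^{p+1}$ and applying orthogonality of the characters of order dividing $p+1$ to write
\begin{equation*}
S(a,0)=\sum_{\psi^{p+1}=1}\bar\psi(a)\,G(\psi,\chi_{1}),
\end{equation*}
then lifting each $\psi$ from a character $\psi''$ of $\gf(p^{2})^{*}$ via the norm and invoking the Hasse--Davenport relation $-G(\psi,\chi_{1})=(-G(\psi'',\chi''_{1}))^{m/2}$ to transfer the evaluation down to $\gf(p^{2})$, where $p+1\mid p^{2}-1$ makes everything explicit. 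The sign $(-1)^{m/2-1}$ produced by Hasse--Davenport, combined with the standard discriminant evaluation of the resulting quadratic-form Gauss sum over $\gf(p^{2})$, should deliver exactly the factor $(-1)^{m/2}$ in case~(1) and $(-1)^{m/2+1}$ in case~(2). All of (i) and (ii) is trace bookkeeping; stage (iii) is where the real work lies.
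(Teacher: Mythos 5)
The paper never proves this lemma: it is imported verbatim from Coulter \cite{Coulter} (the sentence preceding Lemmas~\ref{lem-Coulter1} and~\ref{lem-Coulter2} cites his Theorems 1, 2 and 2.3), so there is no internal argument to compare yours against, and your proposal is best judged as a reconstruction of the cited source. On that basis, stages (i) and (ii) are correct and essentially complete, and they do follow Coulter's route. The substitution $x=y+x_{a,b}$ together with the transposition identity $\tr(uv^{p})=\tr(u^{p^{m-1}}v)$ really does kill the $y$-linear term (the bracket is the $p^{m-1}$-th root of $f(x_{a,b})+b^{p}=0$) and produces the factor $\chi_{1}(-ax_{a,b}^{p+1})$; your kernel identities give both the independence of the choice of $x_{a,b}$ and $\tr(az^{p+1}+bz)=0$ for $z\in\ker f$ when $-b^{p}\in\mathrm{Im}(f)$; and $|S(a,b)|^{2}=q\sum_{z\in\ker f}\chi_{1}(az^{p+1}+bz)$ with the coset description $\ker f=z_{0}\gf(p^{2})$ correctly yields the dichotomy $a^{(q-1)/(p+1)}\neq(-1)^{m/2}$ versus $=(-1)^{m/2}$. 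One small point to make explicit in the vanishing case: you need that $\{b:\tr(bz)=0\ \text{for all}\ z\in\ker f\}$ coincides with $\{b:-b^{p}\in\mathrm{Im}(f)\}$; both are $\gf(p)$-subspaces of dimension $m-2$ and one contains the other, so this is easy but should be said.

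The genuine gap is stage (iii), which is where the entire content of the lemma sits. Stages (i)--(ii) only pin down $|S(a,0)|\in\{p^{m/2},p^{m/2+1}\}$ (and you would still need the small observation that $S(a,0)$ is real, e.g.\ via $x\mapsto\lambda x$ with $\lambda^{p+1}=-1$, which exists since $(q-1)/(p+1)$ is even). The decomposition $S(a,0)=\sum_{\psi\neq 1,\,\psi^{p+1}=1}\bar\psi(a)G(\psi,\chi_{1})$, the norm-lift of each $\psi$ from $\gf(p^{2})$, and Hasse--Davenport are exactly the right tools, but you never execute the semi-primitive evaluation of $G(\psi'',\chi_{1}'')$ over $\gf(p^{2})$ nor the final summation over the $p$ nontrivial characters; and since those Gauss sums equal $(-1)^{(p+1)/d}p$ for $\psi''$ of order $d$, their signs genuinely vary with $d$, so the sum does not collapse to a constant times $\sum_{\psi\neq1}\bar\psi(a)$ and the phrase ``should deliver exactly the factor $(-1)^{m/2}$'' is an expectation rather than a proof. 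Until that computation is carried out (it is the bulk of Coulter's paper), the signed evaluations $(-1)^{m/2}p^{m/2}$ and $-(-1)^{m/2}p^{m/2+1}$ are not established.
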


\subsection{Cyclotomic fields}
In this subsection, we state some basic facts on cyclotomic fields. These results will be used in the rest of this paper.

Let $\mathbb{Z}$ be the rational integer ring and $Q$ be the rational field.
Some results on
cyclotomic field $Q(\zeta_p)$ \cite{IR1990} are given in the following lemma.
\begin{lemma}\label{cyclo} We have the following basic facts.
\begin{enumerate}
 \item
The ring of integers
in $K=Q(\zeta_p)$ is $\mathcal{O}_K=
\mathbb{Z}[\zeta_p]$ and $\{
\zeta_p^{~i}: 1\leq i\leq p-1\}$
is an integral basis of $\mathcal{O}_K$.
\item
The field extension $K/Q$
is Galois of degree $p-1$ and the Galois
group $Gal(K/Q)=\{\sigma_a:
a\in (\mathbb{Z}/p\mathbb{Z})^{*}\}$, where
the automorphism $\sigma_a$ of $K$ is defined by
$\sigma_a(\zeta_p)=\zeta_p^a$.
\item
The field $K$ has a unique
quadratic subfield $L=Q(\sqrt{p^*})$. For $1\leq a\leq p-1$,
$\sigma_a(\sqrt{p^*}) =
\bar{\eta}(a)\sqrt{p^*}$. Therefore, the Galois group
$Gal(L/Q)$ is $\{1,\sigma_{\gamma}\}$, where
$\gamma$ is any quadratic nonresidue in
$\gf(p)$.
\end{enumerate}
\end{lemma}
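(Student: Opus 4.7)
The plan is to prove each of the three parts in turn, invoking standard tools from algebraic number theory; this material is classical and the lemma's real purpose is to fix the notation that will be used when bounding denominators of weight values later on. My proof would follow the outline in Ireland--Rosen very closely.

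For part 1, I would begin by noting that $\zeta_p$ is a root of the cyclotomic polynomial $\Phi_p(x)=x^{p-1}+x^{p-2}+\cdots+1$, irreducible over $Q$ by Eisenstein after the substitution $x\mapsto x+1$; hence $[K:Q]=p-1$ and $\{1,\zeta_p,\ldots,\zeta_p^{p-2}\}$ (equivalently, the claimed basis $\{\zeta_p^i:1\leq i\leq p-1\}$) spans $\mathbb{Z}[\zeta_p]$ over $\mathbb{Z}$. To upgrade this to $\mathcal{O}_K=\mathbb{Z}[\zeta_p]$, I would compute the discriminant $\mathrm{disc}(\Phi_p)=\pm p^{p-2}$ via the resultant formula $\prod_i \Phi_p'(\zeta_p^i)$, which forces $[\mathcal{O}_K:\mathbb{Z}[\zeta_p]]^2$ to divide $p^{p-2}$. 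The $p$-divisibility is then ruled out by the ramification identity $(p)=(1-\zeta_p)^{p-1}$ in $\mathbb{Z}[\zeta_p]$, showing that the localization at $p$ is already maximal.

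For part 2, $K$ is the splitting field of $x^p-1$ over $Q$, so $K/Q$ is Galois. Any $\sigma\in\mathrm{Gal}(K/Q)$ must send $\zeta_p$ to another primitive $p$-th root of unity, i.e.\ to $\zeta_p^a$ for some $a\in(\mathbb{Z}/p\mathbb{Z})^*$, and the assignment $\sigma\mapsto a$ is an injective group homomorphism into $(\mathbb{Z}/p\mathbb{Z})^*$; a cardinality comparison with $[K:Q]=p-1$ upgrades it to an isomorphism.

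For part 3, I would use the fact that $(\mathbb{Z}/p\mathbb{Z})^*$ is cyclic of even order $p-1$, so it has a unique subgroup of index $2$ (the subgroup of squares); by the Galois correspondence this yields a unique quadratic subfield $L$ of $K$. To identify $L$ explicitly, I would invoke the quadratic Gauss sum from Lemma \ref{lem-32A1}:
\[
\sqrt{p^*}=G(\bar{\eta},\bar{\chi}_1)=\sum_{c\in\gf(p)}\bar{\eta}(c)\zeta_p^c\in K.
\]
Applying $\sigma_a$ and substituting $c'=ac$ yields $\sigma_a(\sqrt{p^*})=\bar{\eta}(a^{-1})\sqrt{p^*}=\bar{\eta}(a)\sqrt{p^*}$, so $\sqrt{p^*}$ is fixed by exactly the squares. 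This identifies $L=Q(\sqrt{p^*})$ and gives $\mathrm{Gal}(L/Q)=\{1,\sigma_\gamma\}$ for any nonresidue $\gamma$. The main obstacle in the whole argument is part 1: the discriminant bound only pins down $[\mathcal{O}_K:\mathbb{Z}[\zeta_p]]$ up to a power of $p$, and excluding the $p$-contribution requires the explicit ramification analysis above, which is the most delicate step.
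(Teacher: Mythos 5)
Your proof is correct, and it is essentially the argument the paper implicitly relies on: the paper offers no proof of this lemma at all, simply quoting the facts from Ireland--Rosen \cite{IR1990}, and your three steps (Eisenstein irreducibility of $\Phi_p$ together with the discriminant-and-ramification argument pinning down $\mathcal{O}_K=\mathbb{Z}[\zeta_p]$, the standard identification of $\mathrm{Gal}(K/Q)$ with $(\mathbb{Z}/p\mathbb{Z})^{*}$, and the quadratic Gauss sum computation giving $\sigma_a(\sqrt{p^*})=\bar{\eta}(a)\sqrt{p^*}$) are precisely the textbook treatment found in that reference. You also correctly isolate the one genuinely delicate point, namely excluding a factor of $p$ from the index $[\mathcal{O}_K:\mathbb{Z}[\zeta_p]]$, so nothing further is needed.
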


\section{The linear codes with a few weights}\label{sec-main}

We only describe the codes and introduce their parameters in this section. The proofs of their parameters will be
given in Section \ref{sec-proof}.

In this paper, the defining set $D$ of the code $\C_D$ of (\ref{eqn-maincode})  is given by
\begin{eqnarray}\label{eqn-defsetD}
D=\{x \in \gf(q)^*: \tr(x^{p+1}-x)=0\}.
\end{eqnarray}

When $p=2$, the weight distribution of the code $\C_D$ was settled in \cite{XC}. In this paper, we study the code
$\C_D$ for $p$ being an odd prime.

The following three theorems are the main results of this paper.
\begin{theorem}\label{thm-five}
Let $m\geq 3$ be odd, and let $D$ be defined in (\ref{eqn-defsetD}). Then the set $\C_D$ of (\ref{eqn-maincode}) is an $[n,m]$ linear code over $\gf(p)$ with at most five weights and the weight distribution in Tables \ref{tab-five0} and \ref{tab-five1},  where
\begin{eqnarray}\label{eqn-codelenth1}
n =
p^{m-1}-1+(-1)^{\frac{p-1}{2}\frac{m-1}{2}}\bar{\eta}(m_{p})p^{\frac{m-1}{2}}.
\end{eqnarray}

\end{theorem}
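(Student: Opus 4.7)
The plan is to evaluate the length $n = |D|$ and the Hamming weight of each codeword $c_a$ by exponential sum techniques, then observe that every such weight reduces to a single quadratic character sum over $\gf(p)$ taking at most five values. For the length, orthogonality of additive characters gives
$$pn \;=\; (q-1) - (p-1) + \sum_{y \in \gf(p)^*} S(y,-y).$$
Since $y \in \gf(p)^*$ satisfies $y^p = y$, Lemma~\ref{lem-Coulter1} applies, and the equation $L(x) = 1$ has the unique solution $x = 1/2 \in \gf(p) \subset \gf(q)$ because $L$ restricts to $x \mapsto 2x$ on $\gf(p)$. Hence $S(y,-y) = \sqrt{p^*}^m \bar\eta(y) \zeta_p^{-ym/4}$, and summing over $y \in \gf(p)^*$ produces a twisted Gauss sum; via Lemma~\ref{lem-32A1} this yields the closed form for $n$, with the case $m_p = 0$ automatically absorbed by $\bar\eta(0) = 0$.

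For the weight we write $\wt(c_a) = n - N_a$ with $N_a = \#\{x \in \gf(q)^* : \tr(x^{p+1}-x) = \tr(ax) = 0\}$, apply double orthogonality, and substitute $u = z/y$ to obtain
$$p^2 N_a \;=\; q - p^2 + \sum_{u \in \gf(p)}\sum_{y \in \gf(p)^*} S\bigl(y,\; y(ua - 1)\bigr).$$
Because $L$ is $\gf(p)$-linear and invertible on $\gf(q)$ for $m$ odd, the unique solution of $L(x) = 1 - u a^p$ is $x_u = \tfrac{1}{2} - u\gamma$, where $\gamma := L^{-1}(a^p)$ satisfies $\gamma^{p^2} + \gamma = a^p$. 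Taking traces of this identity, and of its product with $\gamma^p$, yields $\tr(\gamma) = \tr(a)/2$ and $\tr(\gamma^{p+1}) = \tr(a\gamma)/2$, which produces the quadratic
$$T_u \;:=\; \tr(x_u^{p+1}) \;=\; B u^2 - A u + \tfrac{m}{4}, \qquad A := \tfrac{\tr(a)}{2},\quad B := \tfrac{\tr(a\gamma)}{2}.$$
Applying Lemma~\ref{lem-Coulter1} to the inner sum over $y$ converts the whole expression to $p^2 N_a = q - p^2 + \sqrt{p^*}^{m+1}\,\Sigma(a)$, where $\Sigma(a) := \sum_{u \in \gf(p)} \bar\eta(-T_u)$.

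By the classical evaluation of $\sum_u \bar\eta(g(u))$ for a quadratic $g$ over $\gf(p)$, the value $\Sigma(a)$ depends only on whether $B = 0$, on $\bar\eta(-B)$, and on the vanishing of the discriminant $\Delta := A^2 - Bm$. A short sub-case analysis---noting that if $m_p = 0$ then $p\bar\eta(-m_p)$ collapses with $0$, while if $m_p \neq 0$ then the condition $\Delta = 0 \neq B$ forces $\bar\eta(-B) = \bar\eta(-1)\bar\eta(m_p)$, pinning down one of the two sign classes---shows that $\Sigma(a)$ takes at most five distinct values, each lying in $\{0,\,p\bar\eta(-m_p),\,\pm 1,\,\pm(p-1)\}$. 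Substitution into the formula for $\wt(c_a)$ thus yields at most five distinct positive weights; their positivity for $m \ge 3$ then forces $\dim \C_D = m$. The hard part will be not any of these reductions but the enumeration of how many $a \in \gf(q)^*$ fall into each class, which requires analysing the $\gf(p)$-valued quadratic form $Q(a) := \tr\bigl(a L^{-1}(a^p)\bigr)$ on $\gf(q)$ together with its restriction to the hyperplane $\tr(a) = 0$. The orbit counts so obtained then determine the frequencies $A_w$ in Tables~\ref{tab-five0}--\ref{tab-five1}, and this structural analysis will constitute the bulk of Section~\ref{sec-proof}.
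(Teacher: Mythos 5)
Your reductions are correct and, for the length and for the set of weight values, follow essentially the same route as the paper: both pass through Coulter's evaluation of $S(a,b)$ (Lemmas \ref{lem-Coulter1}) and end with a quadratic character sum in one $\gf(p)$-variable. Your organization of the middle step is somewhat cleaner than the paper's: where the paper keeps the sum over $y$ symbolic via the Galois automorphisms $\sigma_y$ of $Q(\zeta_p)$ (Lemmas \ref{cyclo} and \ref{lem-A2}) and then runs a case analysis on whether $\tr(x_b^{p+1})$ is zero, a square, or a nonsquare and on whether $\tr(x_b)$ vanishes (Lemmas \ref{lem-32B3} and \ref{lem-32B4}), you execute the $y$-sum at once as a Gauss sum and land on the single quantity $\Sigma(a)=\sum_{u}\bar{\eta}(-T_u)$ with $T_u$ an explicit quadratic in $u$. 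The identities $\tr(\gamma)=\tr(a)/2$ and $\tr(\gamma^{p+1})=\tr(a\gamma)/2$ check out, the classical evaluation of $\sum_u\bar{\eta}(g(u))$ gives exactly the five values you list, and your observation that $\Delta=0$ with $B\neq 0$ forces $\bar{\eta}(B)=\bar{\eta}(m_p)$ is the correct reason only one of the two values $\pm(p-1)$ survives when $m_p\neq 0$. All of this is consistent with the paper's Lemmas \ref{lem-32B1}--\ref{lem-32B4}, and your positivity argument does give $\dim\C_D=m$.

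The genuine gap is the weight distribution itself. Tables \ref{tab-five0} and \ref{tab-five1} are part of the theorem, and your proposal only names the counting problem (analyse $Q(a)=\tr(aL^{-1}(a^p))$ on $\gf(q)$ and on the hyperplane $\tr(a)=0$) without carrying it out. Concretely, since $a\mapsto\gamma=L^{-1}(a^p)$ is a bijection of $\gf(q)^*$ for $m$ odd, what you must evaluate is $|\{x:\tr(x^{p+1})=c\}|$ and $|\{x:\tr(x^{p+1})=c,\ \tr(x)=0\}|$ for each $c$ --- the paper's Lemmas \ref{lem-F1} and \ref{lem-F2}, each another Weil-sum computation --- and, when $m_p\neq 0$, you must additionally split the class $\{B\neq 0,\ A\neq 0,\ \bar{\eta}(B)=\bar{\eta}(m_p)\}$ according to whether $\Delta=A^2-Bm_p$ vanishes; that requires the further count $\bar{N}_0=|\{x:\tr(x^{p+1})=\tr(x)^2/m_p\}|$ of Lemma \ref{lem-F3} (or, as the paper actually does for the last two frequencies, the first two Pless power moments combined with the minimum distance of $\C_D^{\perp}$ being at least $2$). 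This enumeration is not routine bookkeeping to be deferred: it is where every entry of the tables comes from and it occupies the larger part of the paper's Section \ref{sec-proof}, so as written your argument proves the length, the dimension, and the ``at most five weights'' claim, but not the stated weight distribution.
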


\begin{table}[ht]
%\centering
\begin{center}
\caption{The weight distribution of $\C_D$ of Theorem \ref{thm-five} when $m_p=0$}\label{tab-five0}
\begin{tabular}{|c|c|} \hline
Weight $w$ &  Multiplicity $A_w$  \\ \hline
$0$          &  $1$ \\ \hline
$(p-1)p^{m-2}$  & $p^{m-1}-1$\\ \hline
$(p-1)(p^{m-2}-(-1)^{\frac{p-1}{2}\frac{m-1}{2}}p^{\frac{m-3}{2}})$  & $\frac{p-1}{2}(p^{m-2}+(-1)^{\frac{p-1}{2}\frac{m-1}{2}}p^{\frac{m-1}{2}})$ \\ \hline
$(p-1)(p^{m-2}+(-1)^{\frac{p-1}{2}\frac{m-1}{2}}p^{\frac{m-3}{2}})$  & $\frac{p-1}{2}(p^{m-2}-(-1)^{\frac{p-1}{2}\frac{m-1}{2}}p^{\frac{m-1}{2}})$\\ \hline
$(p-1)p^{m-2}+(-1)^{\frac{p-1}{2}\frac{m-1}{2}}p^{\frac{m-3}{2}}$  & $\frac{1}{2}(p-1)^2p^{m-2}$ \\ \hline
$(p-1)p^{m-2}-(-1)^{\frac{p-1}{2}\frac{m-1}{2}}p^{\frac{m-3}{2}}$  & $\frac{1}{2}(p-1)^2p^{m-2}$\\ \hline
\end{tabular}
\end{center}
\end{table}

\begin{table}[ht]
\begin{center}
\caption{The weight distribution of $\C_D$  of Theorem \ref{thm-five} when $m_p\neq 0$}\label{tab-five1}
\begin{tabular}{|c|c|} \hline
Weight $w$ &  Multiplicity $A_w$  \\ \hline
$0$          &  $1$ \\ \hline
$(p-1)p^{m-2}$  & $p^{m-2}-1+\bar{\eta}(m_p)(-1)^{\frac{p-1}{2}\frac{m-1}{2}}(p-1)p^{\frac{m-3}{2}}$\\ \hline
$(p-1)p^{m-2}+\bar{\eta}(m_p)(-1)^{\frac{p-1}{2}\frac{m-1}{2}}p^{\frac{m-1}{2}}$  & $p^{m-2}(p-1)-\bar{\eta}(m_p)(-1)^{\frac{p-1}{2}\frac{m-1}{2}}(p-1)p^{\frac{m-3}{2}}$ \\ \hline
$(p-1)p^{m-2}+\bar{\eta}(m_p)(-1)^{\frac{p-1}{2}\frac{m-1}{2}}(p+1)p^{\frac{m-3}{2}}$  & $\frac{1}{2}(p-1)(p-2)p^{\frac{m-3}{2}}(p^{\frac{m-1}{2}}-\bar{\eta}(m_p)(-1)^{\frac{p-1}{2}\frac{m-1}{2}})$\\ \hline
$(p-1)p^{m-2}+\bar{\eta}(m_p)(-1)^{\frac{p-1}{2}\frac{m-1}{2}}(p-1)p^{\frac{m-3}{2}}$  & $\frac{p-1}{2}(p^{m-1}-\bar{\eta}(m_p)(-1)^{\frac{p-1}{2}\frac{m-1}{2}}p^{\frac{m-1}{2}})$ \\ \hline
$(p-1)p^{m-2}+\bar{\eta}(m_p)(-1)^{\frac{p-1}{2}\frac{m-1}{2}}p^{\frac{m-3}{2}}$  & $(p-1)p^{m-2}+\bar{\eta}(m_p)(-1)^{\frac{p-1}{2}\frac{m-1}{2}}(p-1)^2p^{\frac{m-3}{2}}$\\ \hline
\end{tabular}
\end{center}
\end{table}

%\begin{example}
%Let $(p,m)=(3,3)$. Then the code $\C_D$ has parameters $[8, 3, 4]$ and weight enumerator
%$1+12z^4+8z^6+6z^8$. This code is almost optimal, as the optimal ternary code has parameters
%$[8,3,5]$.
%\end{example}

\begin{example}
Let $(p,m)=(3,5)$. Then the code $\C_D$ has parameters $[71, 5, 42]$ and weight enumerator
$1+30z^{42}+60z^{45}+90z^{48}+42z^{51}+20z^{54}$, which is verified by a Magma program.
\end{example}
\begin{example}
Let $(p,m)=(3,9)$. Then the code $\C_D$ has parameters $[6560, 9, 4320]$  and weight enumerator
$1+2268z^{4320}+4374z^{4347}+6560z^{4374}+4374z^{4401}+2106z^{4428}$, which is verified by a Magma program.
\end{example}
%This code is optimal due to the Griesmer bound.

\begin{remark}
The code $\C_D$ of Theorem \ref{thm-five} is a five-weight linear code except in the following two cases:
\begin{enumerate}
  \item When $m=3$ and $m_p=0$, the frequency of the weight $(p-1)(p^{m-2}-(-1)^{\frac{p-1}{2}\frac{m-1}{2}}p^{\frac{m-3}{2}})$ of Table \ref{tab-five0} turns out to be $0$. Thus, the code $\C_D$ has four nonzero weights.
  \item When $m=3$, $m_p\neq 0$ and $p\equiv 2 \pmod{3}$,  the frequency of the weight $(p-1)p^{m-2}$ of Table \ref{tab-five1} turns out to be $0$. Thus, the code $\C_D$ has only four nonzero weights.
\end{enumerate}
\end{remark}
\begin{example}
Let $(p,m)=(3,3)$. Then the code $\C_D$ has parameters $[8, 3, 4]$ and weight enumerator
$1+6z^{4}+6z^{5}+8z^{6}+6z^{7}$, which is verified by a Magma program. This code is almost
optimal, since the optimal linear code has parameters $[8, 3, 5]$.
\end{example}
\begin{example}
Let $(p,m)=(5,3)$. Then the code $\C_D$ has parameters $[19, 3, 14]$ and weight enumerator
$1+36z^{14}+24z^{15}+60z^{16}+4z^{19}$, which is verified by a Magma program. This code is
optimal.
\end{example}

\begin{theorem}\label{thm-three}
Let $m\geq 2$ be even and $m\equiv 2 \pmod{4}$, and let $D$ be defined in (\ref{eqn-defsetD}). Then the set $\C_D$ of (\ref{eqn-maincode}) is an $[n,m]$ linear code over $\gf(p)$ with at most three weights and the weight distribution in Table \ref{tab-three0} and Table \ref{tab-three1}, where
\begin{eqnarray}\label{eqn-codelenth2}
n =
\left\{ \begin{array}{ll}
p^{m-1}-1-(p-1)p^{\frac{1}{2}(m-1+(-1)^{\frac{m}{2}})} & \mbox{ if $m_p=0 $,} \\
p^{m-1}-1+p^{\frac{1}{2}(m-1+(-1)^{\frac{m}{2}})} & \mbox{ if $m_p \neq 0$.}
\end{array}
\right.
\end{eqnarray}
\end{theorem}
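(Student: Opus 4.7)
The proof will mirror the approach used for Theorem~\ref{thm-five}, but with Lemma~\ref{lem-Coulter2} replacing Lemma~\ref{lem-Coulter1}. The hypothesis $m\equiv 2\pmod 4$ gives $(-1)^{m/2}=-1$, while every $y\in\gf(p)^{*}$ satisfies $y^{(q-1)/(p+1)}=1$ (since for even $m$ the integer $(q-1)/(p+1)=p^{m-1}-p^{m-2}+\cdots-1$ is divisible by $p-1$). Consequently only case~1 of Lemma~\ref{lem-Coulter2} ever arises, and $f_{y}(x):=y^{p}x^{p^{2}}+yx=yL(x)$ is a permutation polynomial of $\gf(q)$ for every $y\in\gf(p)^{*}$.

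\emph{Length.} Orthogonality gives
\[ n=\frac{q-1}{p}+\frac{1}{p}\sum_{y\in\gf(p)^{*}}\bigl(S(y,-y)-1\bigr). \]
Because $y\in\gf(p)$, the equation $f_{y}(x)=-(-y)^{p}$ collapses to $L(x)=1$, whose unique solution $u\in\gf(q)$ does not depend on $y$; hence $S(y,-y)=-p^{m/2}\zeta_{p}^{-y\tr(u^{p+1})}$. A Frobenius manipulation based on $u^{p^{2}}=1-u$ yields $\tr(u^{p+1})=\tr(u)/2$, and $2\tr(u)=\tr(L(u))=\tr(1)=m_{p}$, so $\tr(u^{p+1})=m_{p}/4$ in $\gf(p)$. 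Summing over $y\in\gf(p)^{*}$ now splits into $m_{p}=0$ (the exponential trivialises) and $m_{p}\neq 0$ (the sum of primitive $p$-th roots equals $-1$), producing the two branches of~(\ref{eqn-codelenth2}).

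\emph{Weights.} For $a\in\gf(q)^{*}$, put $N(a)=|\{x\in\gf(q)^{*}:\tr(x^{p+1}-x)=0,\ \tr(ax)=0\}|$, so that $\wt(c_{a})=n-N(a)$. A double-orthogonality expansion gives
\[ N(a)=\frac{q}{p^{2}}-1+\frac{1}{p^{2}}\sum_{y\in\gf(p)^{*},\ z\in\gf(p)}S(y,za-y). \]
By $\gf(p)$-linearity and bijectivity of $L$, the solution of $yL(x)=y-za^{p}$ has the affine form $x_{y,z,a}=u+(z/y)\,v_{a}$, where $v_{a}:=L^{-1}(-a^{p})$ depends only on $a$. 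Expanding $x_{y,z,a}^{p+1}$ and repeatedly invoking $\tr(w)=\tr(w^{p})$ with $v_{a}^{p^{2}}=-a^{p}-v_{a}$ yields $\tr(u^{p}v_{a}+uv_{a}^{p})=-\tr(au)$ and $\tr(v_{a}^{p+1})=-\tr(av_{a})/2$, from which
\[ \tr\bigl(-y\,x_{y,z,a}^{p+1}\bigr)=-\tfrac{m_{p}}{4}\,y+z\,\tr(au)+\tfrac{z^{2}}{2y}\,\tr(av_{a}). \]
The inner sum over $z\in\gf(p)$ is then a quadratic Gauss sum, evaluated by Lemma~\ref{lem-32A2} when $\tr(av_{a})\neq 0$, and collapsing to $p\cdot\mathbf{1}_{\tr(au)=0}$ when $\tr(av_{a})=0$. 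The remaining outer sum over $y\in\gf(p)^{*}$ reduces to a Gauss sum on $\gf(p)$, evaluated via Lemmas~\ref{lem-32A1} and~\ref{lem-bothcharac}.

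\emph{Main obstacle.} The hardest step will be the final case analysis and multiplicity count: partition $\gf(q)^{*}$ according to the joint vanishing pattern of $\tr(au)$, $\tr(av_{a})$ and $\tr(a^{p+1}-a)$, verify that under $m\equiv 2\pmod 4$ only three distinct weight values can actually occur, and count the size of each stratum via a further character-sum argument. The dichotomy between Tables~\ref{tab-three0} and~\ref{tab-three1} should emerge from whether $m_{p}=0$, since this governs the constant term in the quadratic above and thereby triggers (or suppresses) a Gauss-sum cancellation in the outer $y$-sum.
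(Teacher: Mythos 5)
Your proposal follows the same route as the paper: orthogonality relations reduce the length and the weights to the Weil sums $S(y,\cdot)$, which Lemma~\ref{lem-Coulter2} (always case~1 here, since $y^{(q-1)/(p+1)}=1\neq(-1)^{m/2}$ for $y\in\gf(p)^*$ when $m\equiv 2\pmod 4$) evaluates via the solution of $L(x)=-(\cdot)^p$, and the inner $z$-sum becomes a quadratic Gauss sum over $\gf(p)$ handled by Lemma~\ref{lem-32A2}. Your length computation is correct and reproduces Lemmas~\ref{lem-32B1} and~\ref{lem-32B2} (indeed $u=1/2$, so $\tr(u^{p+1})=m_p/4$), and your trace identities $\tr(u^pv_a+uv_a^p)=-\tr(au)$ and $\tr(v_a^{p+1})=-\tr(av_a)/2$ check out; your $v_a$ is the paper's $x_b$, and your conditions $\tr(au)=0$, $\tr(av_a)=0$ are exactly the paper's $\tr(x_b)=0$, $\tr(x_b^{p+1})=0$ in Lemmas~\ref{lem-32B3} and~\ref{lem-32B4}.

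The genuine gap is the multiplicity count, which you defer as the ``main obstacle'' without resolving, and the stratification you propose for it is not the right one. The quantity $\tr(a^{p+1}-a)$ plays no role in $\wt(\bc_a)$. More importantly, when $m_p\neq 0$ the outer $y$-sum is governed by the vanishing of the completed-square constant $A=-\frac{m_p}{4}+\frac{\tr(v_a)^2}{4\tr(v_a^{p+1})}$, not merely by whether $\tr(au)$ and $\tr(av_a)$ vanish; counting the stratum $A=0$ requires evaluating $\bar{N}_0=|\{x\in\gf(q):\tr(x^{p+1})-\tr(x)^2/m_p=0\}|$, which the paper does by a separate Fourier-expansion and Gauss-sum argument (Lemma~\ref{lem-F3}), alongside the counts $N_a$ and $N_{(a,0)}$ of Lemmas~\ref{lem-F1} and~\ref{lem-F2}. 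Moreover, on the stratum $\tr(v_a^{p+1})\neq 0$, $A\neq 0$ the weight takes two distinct values depending on a quadratic-residue sign, so the invariants you list do not separate the weights; the paper sidesteps counting this split directly by computing only $A_{w_2}$ from the lemmas above and recovering the remaining frequencies from the first two Pless power moments together with the observation that $\C_D^{\perp}$ has minimum weight at least $2$. Without these ingredients the frequency columns of Tables~\ref{tab-three0} and~\ref{tab-three1} are not established.
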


\begin{table}[ht]
\begin{center}
\caption{The weight distribution of $\C_D$  of Theorem \ref{thm-three} when $m_p=0$} \label{tab-three0}
\begin{tabular}{|c|c|} \hline
Weight $w$ &  Multiplicity $A_w$  \\ \hline
$0$          &  $1$ \\ \hline
$(p-1)p^{m-2}$  & $p^{m-2}-(p-1)p^{\frac{m}{2}-1}-1$ \\ \hline
$(p-1)(p^{m-2}-p^{\frac{m}{2}-1})$  & $(p-1)(2p^{m-2}+p^{\frac{m}{2}-1})$\\ \hline
$(p-1)p^{m-2}-(p-2)p^{\frac{m}{2}-1}$  & $(p-1)^2 p^{m-2}$\\ \hline
\end{tabular}
\end{center}
\end{table}

\begin{table}[ht]
\begin{center}
\caption{The weight distribution of $\C_D$  of Theorem \ref{thm-three} when $m_p \neq 0$}\label{tab-three1}
\begin{tabular}{|c|c|} \hline
Weight $w$ &  Multiplicity $A_w$  \\ \hline
$0$          &  $1$ \\ \hline
$(p-1)p^{m-2}$  & $p^{m-2}+\frac{p-1}{2}(p^{m-1}+p^{\frac{m}{2}})-1$ \\ \hline
$(p-1)p^{m-2}+p^{\frac{m}{2}-1}$  & $(p-1)(2p^{m-2}-p^{\frac{m}{2}-1})$\\ \hline
$(p-1)p^{m-2}+2p^{\frac{m}{2}-1}$  & $\frac{(p-1)(p-2)}{2}(p^{m-2}-p^{\frac{m}{2}-1})$ \\ \hline
\end{tabular}
\end{center}
\end{table}

\begin{example}
Let $(p,m)=(3,6)$. Then the code $\C_D$ has parameters $[224, 6, 144]$ and weight enumerator
$1+342z^{144}+324z^{153}+62z^{162}$, which is verified by a Magma program.
\end{example}

\begin{example}
Let $(p,m)=(5,6)$. Then the code $\C_D$ has parameters $[3149, 6, 2500]$ and weight enumerator
$1+7124z^{2500}+2525z^{4900}+2550z^{3600}$, which is verified by a Magma program.
\end{example}

\begin{remark}
The code $\C_D$ of Theorem \ref{thm-three} has three weights except in the case $m=2$, as the frequency
of the weight $(p-1)p^{m-2}+2p^{\frac{m}{2}-1}$ of Table \ref{tab-three1} turns out to be $0$.
Hence, The code $\C_D$ of Theorem \ref{thm-three} is a two-weight linear code if and only if $m=2$.
\end{remark}

\begin{example}
Let $(p,m)=(3,2)$. Then the code $\C_D$ has parameters $[3, 2, 2]$ and weight enumerator
$1+6z^{2}+2z^{3}$, which is verified by a Magma program. This code is optimal.
\end{example}

\begin{theorem}\label{thm-four}
Let $m\geq 6$ be even and $m\equiv 0 \pmod{4}$, and let $D$ be defined in (\ref{eqn-defsetD}). Then the set $\C_D$ of (\ref{eqn-maincode}) is a four-weight linear code over $\gf(p)$ with the parameter $[n, m]$ and the weight distribution in Tables \ref{tab-four0} and \ref{tab-four1}, where
$n$ is defined by (\ref{eqn-codelenth2}).
\end{theorem}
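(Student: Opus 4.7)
The plan is to mimic the proof of Theorem~\ref{thm-three}, but to apply Case~(2) of Lemma~\ref{lem-Coulter2} throughout in place of Case~(1). The distinguishing feature of $m \equiv 0 \pmod 4$ is that $(-1)^{m/2} = 1$ and every $y \in \gf(p)^*$ satisfies $y^{(q-1)/(p+1)} = 1$ (since $(p^m-1)/(p+1) \equiv 0 \pmod{p-1}$ when $m$ is even), so Case~(2) of Lemma~\ref{lem-Coulter2} applies to every $S(y_1, b)$ with $y_1 \in \gf(p)^*$. First I would compute the length via $pn = (q-1) + \sum_{y \in \gf(p)^*}(S(y,-y)-1)$. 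Solvability of $L(z) = 1$ holds because any solution satisfies $z^{p^4} = z$ and so lies in $\gf(p^4) \subseteq \gf(q)$; fixing such a root $x_*$, the identity $x_*^{p^2} = 1 - x_*$ yields $A := \tr(x_*^{p+1}) = m/4 \bmod p$, and summing over $y \in \gf(p)^*$ yields (\ref{eqn-codelenth2}) in the two cases $m_p = 0$ (so $A=0$) and $m_p \neq 0$ (so $A \neq 0$).

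Next, for $x \in \gf(q)^*$ I set $N(x) = n - \wt(c_x) = |\{d \in D : \tr(xd) = 0\}|$ and expand by double orthogonality to obtain
$$p^2 N(x) \;=\; q - p^2 + \sum_{y_1 \in \gf(p)^*}\sum_{y_2 \in \gf(p)} S(y_1, y_2 x - y_1).$$
Lemma~\ref{lem-Coulter2}(2) shows each inner summand vanishes unless $(y_2/y_1)x^p \in \image(L)$; since $\image(L)$ is $\gf(p)$-linear and Frobenius-stable and contains $1$, this reduces to ``$y_2 = 0$ or $x \in \image(L)$''. For $x \in \image(L)$, writing $1 = L(x_*)$ and $x^p = L(u)$ gives $z = x_* - (y_2/y_1)u$ and the key quadratic identity
$$\tr(z^{p+1}) \;=\; A - (y_2/y_1) B + (y_2/y_1)^2 C,$$
with $B = \tr(x_* u^p + u x_*^p)$ and $C = \tr(u^{p+1})$; all three of $A, B, C$ are independent of the choices of $x_*, u$ modulo $\ker(L)$, using $\nu^{p^2} = -\nu$ for $\nu \in \ker(L)$ and the oddness of $p$.

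Third, fixing $y_1 \in \gf(p)^*$ and summing over $y_2 \in \gf(p)$ via Lemma~\ref{lem-32A2} yields an elementary sum when $C = 0$ and a quadratic Gauss sum carrying a $\bar\eta$-factor when $C \neq 0$; the outer sum over $y_1$ then collapses using Lemmas~\ref{lem-32A1} and~\ref{lem-bothcharac} together with the identity $p^* \bar\eta(-1) = p$. This yields four distinct weights corresponding to (i) $x \notin \image(L)$, (ii) $x \in \image(L)$ with $B = C = 0$, (iii) $x \in \image(L)$ with exactly one of $B, C$ zero, and (iv) $x \in \image(L)$ with both $B, C$ nonzero. The multiplicities $A_w$ are then obtained by counting the $x$ in each class using $|\image(L)| = p^{m-2}$ and Weil-type counts for the forms $B, C$ on $\image(L)$, cross-checked against $\sum_w A_w = p^m$ and the first-moment identity $\sum_w w A_w = n(p-1)p^{m-1}$.

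The main obstacle is the analysis in the third step: the quadratic-in-$y_2$ dependence of $\tr(z^{p+1})$ forces a genuine Gauss-sum evaluation at each $y_1$, and the subsequent $y_1$-summation requires careful tracking of $\bar\eta$-signs, which is qualitatively different from the linear substitution that handles the $m \equiv 2 \pmod 4$ case in Theorem~\ref{thm-three}. The hypothesis $m \geq 6$ is imposed so that $\dim_{\gf(p)} \image(L) = m - 2 \geq 4$, ensuring that all four classes above are nonempty and the code is genuinely four-weight.
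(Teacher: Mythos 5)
Your overall architecture is the same as the paper's: evaluate the character sums via Case (2) of Lemma \ref{lem-Coulter2}, use the solvability of $L(z)=1$ (the paper simply takes $x_*=\tfrac{1}{2}$) to reduce the inner sum to a quadratic in $t=y_2/y_1$, apply Lemma \ref{lem-32A2}, and finish with the Pless power moments. The length computation and the identity $p^2N(x)=q-p^2+\sum_{y_1}\sum_{y_2}S(y_1,y_2x-y_1)$ are correct, as is the observation that the summand vanishes unless $y_2=0$ or $x\in\image(L)$.

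However, there is a genuine gap in your third step when $m_p\neq 0$: the four weights do \emph{not} correspond to the four classes (i) $x\notin\image(L)$, (ii) $B=C=0$, (iii) exactly one of $B,C$ zero, (iv) both nonzero. That correspondence holds only when $m_p=0$ (where $A=0$). For $m_p\neq 0$ one has $A=m_p/4\neq 0$, and carrying out the Gauss-sum evaluation gives, for $C\neq 0$, an inner sum proportional to $\zeta_p^{-y_1A'}\bar\eta(-y_1C)\sqrt{p^*}$ with $A'=A-B^2/(4C)$; the outer sum over $y_1$ then vanishes if $A'=0$ and equals $\pm p^{\frac{m}{2}+2}$ with sign $\bar\eta(-CA')$ if $A'\neq 0$. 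Consequently your class (iii) with $C=0$, $B\neq0$ gives one weight, your class (iii) with $C\neq0$, $B=0$ splits into two weights according to $\bar\eta(-CA)$, and your class (iv) splits into three subcases ($A'=0$, and the two signs of $\bar\eta(-CA')$); parts of (iii) and (iv) land on the same weight as (ii). This is exactly the ``$p^{\frac{m}{2}+1}(p-1)$ or $-p^{\frac{m}{2}+1}(p+1)$'' ambiguity recorded in Lemma \ref{lem-32B3}(III). Because of it, counting the vanishing patterns of $B$ and $C$ (your ``Weil-type counts'') does not determine the multiplicities: you would need the joint distribution of $(B,C)$ together with the quadratic character of $m_p/4-B^2/(4C)$. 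The paper avoids this by computing only $A_{w_1}$ and $A_{w_3}$ directly --- the latter via the auxiliary count $\bar{N}_0=|\{x:\tr(x^{p+1})-\tfrac{1}{m_p}\tr(x)^2=0\}|$ of Lemma \ref{lem-F3}, which is precisely the locus $A'=0$ --- and then solving for the remaining two frequencies from the first two power moments; in your outline the moment identities appear only as cross-checks, not as the mechanism that resolves the ambiguity. A further small slip: the relation $x_*^{p^2}=1-x_*$ yields $\tr(x_*)=m_p/2$, not directly $\tr(x_*^{p+1})=m_p/4$; the latter is easiest from the explicit solution $x_*=\tfrac{1}{2}$ together with your (correct) independence-of-choice argument.
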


\begin{table}[ht]
\begin{center}
\caption{The weight distribution of $\C_D$  of Theorem \ref{thm-four} when $m_p=0$ } \label{tab-four0}
\begin{tabular}{|c|c|} \hline
Weight $w$ &  Multiplicity $A_w$  \\ \hline
$0$          &  $1$ \\ \hline
$p^{m-2}(p-1)-(p-1)^2p^{\frac{m}{2}-1}$  & $(p^2-1)p^{m-2}$ \\ \hline
$p^{m-2}(p-1)$  & $p^{m-4}-(p-1)p^{\frac{m}{2}-2}-1$\\ \hline
$(p-1)p^{\frac{m}{2}}(p^{\frac{m}{2}-2}-1)$  & $(p-1)(2p^{m-4}+p^{\frac{m}{2}-2})$ \\ \hline
$(p-1)p^{m-2}-(p-2)p^{\frac{m}{2}}$  & $(p-1)^2p^{m-4}$\\ \hline
\end{tabular}
\end{center}
\end{table}

\begin{table}[ht]
\begin{center}
\caption{The weight distribution of $\C_D$  of Theorem \ref{thm-four} when $m_p\neq 0$}\label{tab-four1}
\begin{tabular}{|c|c|} \hline
Weight $w$ &  Multiplicity $A_w$  \\ \hline
$0$          &  $1$ \\ \hline
$(p-1)(p^{\frac{m}{2}-1}+p^{m-2})$  & $p^m-p^{m-2}$ \\ \hline
$(p-1)p^{m-2}$  & $p^{m-4}+\frac{p-1}{2}(p^{\frac{m}{2}-1}+p^{m-3})-1$\\ \hline
$(p-1)p^{m-2}+p^{\frac{m}{2}}$  & $(p-1)(2p^{m-4}-p^{\frac{m}{2}-2})$\\ \hline
$(p-1)p^{m-2}+2p^{\frac{m}{2}}$  & $\frac{1}{2}(p-1)(p-2)(p^{m-4}-p^{\frac{m}{2}-2})$ \\ \hline
\end{tabular}
\end{center}
\end{table}

\begin{example}
Let $(p,m)=(3,8)$. Then the code $\C_D$ has parameters $[2267, 8, 1458]$ and weight enumerator
$1+350z^{1458}+5832z^{1512}+306z^{1539}+32z^{1620}$, which is verified by a Magma program.
\end{example}

\begin{example}
Let $(p,m)=(5,8)$. Then the code $\C_D$ has parameters $[78749, 8, 62500]$ and weight enumerator
$1+7124z^{62500}+375000z^{63000}+4900z^{63125}+3600z^{63750}$, which is verified by a Magma program.
\end{example}

\section{The proofs of the main results}\label{sec-proof}

Our task of this section is to prove Theorems \ref{thm-five}, \ref{thm-three} and \ref{thm-four}, respectively. To this end, we shall prove a few more auxiliary results before proving the main results of this paper.

\subsection{Some auxiliary results}

\begin{lemma}\label{lem-32B1}
With the symbols and notations above, we have
\begin{eqnarray*}
\sum_{c \in \gf(p)^*} S(c,-c)=\left\{ \begin{array}{ll}
(-1)^{\frac{p-1}{2}}\bar{\eta}(m_p)\sqrt{p^*}^{m+1}                               & \mbox{ if $m$ is odd,} \\
-(p-1)p^{\frac{1}{2}(m+1+(-1)^{\frac{m}{2}})}& \mbox{ if $m$ is even and $m_p=0$,} \\
p^{\frac{1}{2}(m+1+(-1)^{\frac{m}{2}})}& \mbox{ if $m$ is even and $m_p\neq 0$}.
\end{array}
\right.
\end{eqnarray*}
\end{lemma}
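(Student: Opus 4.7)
The plan is to substitute the values of $S(c,-c)$ supplied by Lemmas \ref{lem-Coulter1} and \ref{lem-Coulter2} and sum over $c\in\gf(p)^*$, splitting on $m\bmod 4$. The key simplification is that $c^{p-1}=1$ for $c\in\gf(p)^*$, so the companion equation $c^p x^{p^2}+cx=-(-c)^p=c^p$ of those lemmas reduces, after division by $c^p$, to the $c$-free equation $L(x)=x^{p^2}+x=1$. Moreover, for $m$ even the identity $(p^m-1)/(p+1)=(p-1)\sum_{i=0}^{m/2-1}p^{2i}$ shows that $c^{(q-1)/(p+1)}=1$ for every $c\in\gf(p)^*$, and this value is $\ne(-1)^{m/2}$ precisely when $m\equiv 2\pmod 4$, so exactly one branch of Lemma \ref{lem-Coulter2} applies in each even-$m$ subcase.

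The crux of the argument is to compute $T:=\tr(x_0^{p+1})\in\gf(p)$ for any solution $x_0\in\gf(q)$ of $L(x)=1$. From $x_0^{p^2}=1-x_0$ one obtains $x_0^{p^4}=1-x_0^{p^2}=x_0$, so every solution lies in $\gf(p^4)\cap\gf(q)=\gf(p^{\gcd(4,m)})$. When $m$ is odd or $m\equiv 2\pmod 4$ this forces $x_0^{p^2}=x_0$ and hence $2x_0=1$, yielding the unique solution $x_0=2^{-1}\in\gf(p)$ and $T=m\cdot 4^{-1}=m_p/4$. When $4\mid m$, trace transitivity gives $T=(m/4)\tr_{\gf(p^4)/\gf(p)}(x_0^{p+1})$; writing $\alpha_i=x_0^{p^i}$ and using $\alpha_0+\alpha_2=\alpha_1+\alpha_3=1$ (the second equality by applying Frobenius to the first), the inner trace telescopes:
$$
\tr_{\gf(p^4)/\gf(p)}(x_0^{p+1})=\sum_{i=0}^{3}\alpha_i\alpha_{i+1}=(\alpha_0+\alpha_2)(\alpha_1+\alpha_3)=1,
$$
so again $T=m_p/4$. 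In particular $T=0$ iff $m_p=0$, in every case.

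With $\chi_1(-c\,x_0^{p+1})=\zeta_p^{-cT}$, the odd-$m$ sum becomes, via Lemmas \ref{lem-Coulter1} and \ref{lem-bothcharac},
$$
\sum_{c\in\gf(p)^*}S(c,-c)=\sqrt{p^*}^{\,m}\sum_{c\in\gf(p)^*}\bar{\eta}(c)\zeta_p^{-cT};
$$
the inner sum is $0$ when $T=0$, and equals $\bar{\eta}(-T)G(\bar{\eta},\bar{\chi}_1)=(-1)^{(p-1)/2}\bar{\eta}(m_p)\sqrt{p^*}$ when $T\ne 0$, which gives the claimed odd-$m$ formula. The two even-$m$ cases follow identically: substitute the appropriate branch of Lemma \ref{lem-Coulter2}, then apply the elementary identity $\sum_{c\in\gf(p)^*}\zeta_p^{-cT}=p-1$ or $-1$ according to whether $T=0$, and compare exponents (the exponent $(m+1+(-1)^{m/2})/2$ equals $m/2$ for $m\equiv 2\pmod 4$ and $m/2+1$ for $4\mid m$, matching the prefactors in Lemma \ref{lem-Coulter2}).

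The principal obstacle is the trace identity $\tr_{\gf(p^4)/\gf(p)}(x_0^{p+1})=1$ when $4\mid m$; once it is recognized as the product factorization $(\alpha_0+\alpha_2)(\alpha_1+\alpha_3)$ of the four-term Frobenius orbit sum, the remainder of the computation is routine bookkeeping of Gauss-sum signs.
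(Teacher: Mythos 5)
Your proposal is correct and follows essentially the same route as the paper: both substitute the evaluations of Lemmas \ref{lem-Coulter1} and \ref{lem-Coulter2} (after noting the equation collapses to the $c$-free $x^{p^2}+x=1$) and then sum the resulting character values over $c\in\gf(p)^*$, using $\tr(x_0^{p+1})=m_p/4$. The only divergence is that the paper simply takes the explicit solution $x_0=\tfrac12$ in every case, so your telescoping computation of $\tr_{\gf(p^4)/\gf(p)}(x_0^{p+1})$ for an arbitrary solution when $4\mid m$ is a correct but unnecessary verification (Coulter's lemma already guarantees the value is independent of the chosen solution).
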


\begin{proof}
For any $c\in \gf(p)^*$, it is easily seen that $f(x)=c^{p}x^{p^2}+cx=-(-c)^p$ is solvable and $\frac{1}{2}$ is its solution.

By definition, Lemmas \ref{lem-Coulter1} and \ref{lem-Coulter2}, we have
\begin{eqnarray*}
& &\sum_{c \in \gf(p)^*} S(c,-c)\\
& &=\sum_{c \in \gf(p)^*} \sum_{x \in \gf(q)} \chi_1\left(c x^{p+1}-cx\right)\\
& &=
\left\{ \begin{array}{ll}
\sum_{c \in \gf(p)^*}\sqrt{p^*}^m \eta(c)\chi_1(-\frac{c}{4})                               & \mbox{ if $m$ is odd,} \\
\sum_{c \in \gf(p)^*}(-1)^{\frac{m}{2}}p^{m/2}\chi_1(-\frac{c}{4})                          & \mbox{ if $m\equiv 2 (\rm{~mod~4})$,} \\
\sum_{c \in \gf(p)^*}-(-1)^{\frac{m}{2}}p^{m/2+1}\chi_1(-\frac{c}{4})                        & \mbox{ if $m\equiv 0 (\rm{~mod~4})$, }
\end{array}
\right.\\
& &=
\left\{ \begin{array}{ll}
 \sqrt{p^*}^m \sum_{c \in \gf(p)^*}\bar{\eta}(c)\bar{\chi}_1(-\frac{c\tr(1)}{4})                               & \mbox{ if $m$ is odd,} \\
-p^{m/2}\sum_{c \in \gf(p)^*}\bar{\chi}_1(\frac{c\tr(1)}{4})                          & \mbox{ if $m\equiv 2 (\rm{~mod~4})$,} \\
-p^{m/2+1}\sum_{c \in \gf(p)^*}\bar{\chi}_1(\frac{c\tr(1)}{4})                        & \mbox{ if $m\equiv 0 (\rm{~mod~4})$ ,}
\end{array}
\right.\\
& &=
\left\{ \begin{array}{ll}
\sqrt{p^*}^m \sum_{c \in \gf(p)^*}\bar{\eta}(c)                          & \mbox{ if $m$ is odd and $m_p=0$,} \\
\sqrt{p^*}^m \sum_{c \in \gf(p)^*}\bar{\eta}(-\frac{m_p}{4})\bar{\eta}(-\frac{cm_p}{4})\bar{\chi}_1(-\frac{cm_p}{4})  & \mbox{ if $m$ is odd and $m_p\neq 0$,} \\
-p^{m/2}(p-1)                          & \mbox{ if $m\equiv 2 (\rm{~mod~4})$ and $m_p=0$,} \\
p^{m/2}                          & \mbox{ if $m\equiv 2 (\rm{~mod~4})$ and $m_p\neq 0$,} \\
-p^{m/2+1}(p-1)                        & \mbox{ if $m\equiv 0 (\rm{~mod~4})$ and $m_p=0$, }\\
p^{m/2+1}                        & \mbox{ if $m\equiv 0 (\rm{~mod~4})$ and $m_p\neq 0$, }
\end{array}
\right.\\
& &=
\left\{ \begin{array}{ll}
0                          & \mbox{ if $m$ is odd and $m_p=0$,} \\
\sqrt{p^*}^m \bar{\eta}(-m_p)G(\bar{\eta}, \bar{\chi}_1)  & \mbox{ if $m$ is odd and $m_p\neq 0$,} \\
-p^{m/2}(p-1)                          & \mbox{ if $m\equiv 2 (\rm{~mod~4})$ and $m_p=0$,} \\
p^{m/2}                          & \mbox{ if $m\equiv 2 (\rm{~mod~4})$ and $m_p\neq 0$,} \\
-p^{m/2+1}(p-1)                        & \mbox{ if $m\equiv 0 (\rm{~mod~4})$ and $m_p=0$, }\\
p^{m/2+1}                        & \mbox{ if $m\equiv 0 (\rm{~mod~4})$ and $m_p\neq 0$ .}
\end{array}
\right.
\end{eqnarray*}
The desired conclusion then follows from Lemma \ref{lem-32A1}.
%The desired conclusion then follows.
\end{proof}

The next lemma will be employed in proving the code length.
\begin{lemma}\label{lem-32B2}
Let
$$
n_0=|\{x \in \gf(q): \tr(x^{p+1}-x)=0\}|.
$$
Then
\begin{eqnarray*}
n_0=\left\{ \begin{array}{ll}
p^{m-1}+(-1)^{\frac{p-1}{2}}\bar{\eta}(m_{p})p^{-1}\sqrt{p^*}^{m+1}                               & \mbox{ if $m$ is odd,} \\
p^{m-1}-(p-1)p^{\frac{1}{2}(m-1+(-1)^{\frac{m}{2}})}        & \mbox{ if $m$ is even and $m_p=0$,} \\
p^{m-1}+p^{\frac{1}{2}(m-1+(-1)^{\frac{m}{2}})}              & \mbox{ if $m$ is even and $m_p\neq 0$.}
\end{array}
\right.
\end{eqnarray*}

\end{lemma}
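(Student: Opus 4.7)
The plan is to reduce $n_0$ to the exponential sum $S(a,b)$ via the standard orthogonality relation for additive characters, and then invoke Lemma \ref{lem-32B1} directly.

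First I would use the indicator identity
$$
\frac{1}{p}\sum_{c \in \gf(p)} \zeta_p^{cy}= \begin{cases} 1 & \text{if } y=0,\\ 0 & \text{if } y \in \gf(p)^*,\end{cases}
$$
applied to $y=\tr(x^{p+1}-x)\in \gf(p)$. Writing $\zeta_p^{c\tr(z)}=\chi_1(cz)$ for the canonical additive character $\chi_1$ of $\gf(q)$, this gives
$$
n_0=\frac{1}{p}\sum_{x\in \gf(q)}\sum_{c\in \gf(p)}\chi_1\bigl(c(x^{p+1}-x)\bigr)=\frac{q}{p}+\frac{1}{p}\sum_{c\in \gf(p)^{*}} S(c,-c),
$$
since the $c=0$ contribution is $q/p=p^{m-1}$ and, for $c \ne 0$, $\sum_x \chi_1(cx^{p+1}-cx)=S(c,-c)$ by the very definition of $S(a,b)$.

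Next I would plug in the three values of $\sum_{c\in\gf(p)^*}S(c,-c)$ furnished by Lemma \ref{lem-32B1}, divide by $p$, and simplify the exponents. In the odd case this immediately yields $p^{m-1}+(-1)^{(p-1)/2}\bar{\eta}(m_p)p^{-1}\sqrt{p^*}^{\,m+1}$. In the two even subcases the exponent simplification is just $\tfrac{1}{2}(m+1+(-1)^{m/2})-1=\tfrac{1}{2}(m-1+(-1)^{m/2})$, producing the remaining two formulas in the statement.

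There is no serious obstacle here: the content is entirely contained in Lemma \ref{lem-32B1}, whose proof already handled the case analysis (odd $m$ via Lemma \ref{lem-Coulter1}, and $m\equiv 2,0 \pmod 4$ via the two branches of Lemma \ref{lem-Coulter2}, together with the fact that $x=\tfrac{1}{2}$ solves the linearized equation $c^{p}x^{p^2}+cx=-(-c)^p$ when $c\in\gf(p)^*$). The only small thing to double-check is the exponent arithmetic in the even case and the sign in front of $(p-1)$, both of which match Lemma \ref{lem-32B1} term by term.
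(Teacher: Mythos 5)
Your proposal is correct and follows essentially the same route as the paper: the paper's proof also applies the orthogonality relation over $y\in\gf(p)$, splits off the $y=0$ term to get $p^{m-1}$, identifies the remainder as $\frac{1}{p}\sum_{y\in\gf(p)^*}S(y,-y)$, and then quotes Lemma \ref{lem-32B1}. The exponent simplification $\tfrac{1}{2}(m+1+(-1)^{m/2})-1=\tfrac{1}{2}(m-1+(-1)^{m/2})$ you note is exactly what reconciles the two statements.
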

\begin{proof}
By definition, we have
\begin{eqnarray*}
n_0 &=& \frac{1}{p} \sum_{x \in \gf(q)} \sum_{y \in \gf(p)} \zeta_p^{y(\tr(x^{p+1})-x)} \\
&=& p^{m-1}+ \frac{1}{p} \sum_{y \in \gf(p)^*} \sum_{x \in \gf(q)}   \zeta_p^{\tr(yx^{p+1}-y)} \\
&=& p^{m-1}+ \frac{1}{p}\sum_{y \in \gf(p)^*} S(y,-y).
\end{eqnarray*}
The desired conclusion then follows from Lemma \ref{lem-32B1}.
\end{proof}

From Lemma \ref{cyclo}, the conclusion of the following lemma is straightforward and we omit their proofs.
%For completeness, we provide a proof below.
\begin{lemma}\label{lem-A2}With the symbols and notations above, we have the following.
%(\uppercase\expandafter{\romannumeral1})
\begin{enumerate}
\item
$
\sum_{y \in \gf(p)^*}\sigma_y(\zeta_p^{~z})
= \left\{ \begin{array}{ll}
p-1     & \mbox{ if $z=0$,} \\
-1     & \mbox{ if $z\neq 0$.}
\end{array}
\right.$

\item

$
\sum_{y \in \gf(p)^*}\sigma_y(\sqrt{p^*}^m)
=\left\{ \begin{array}{ll}
0     & \mbox{ if $m$ is odd,} \\
\sqrt{p^*}^m(p-1)     & \mbox{ if $m$ is even.}
\end{array}
\right.
$
\item
$
\sum_{y \in \gf(p)^*}\sigma_y(\sqrt{p^*}~\zeta_p^{~z})
=\left\{ \begin{array}{ll}
0     & \mbox{ if $z=0$,} \\
\bar{\eta}(z)p^*     & \mbox{ if $z\neq 0$.}
\end{array}
\right.
$
\end{enumerate}
Particularly,
\begin{eqnarray*}
\sum_{y \in \gf(p)^*}\sigma_y(\sqrt{p^*}^m~\zeta_p^{-\frac{m_p}{4}} )
&=&\left\{ \begin{array}{ll}
0     & \mbox{ if $m$ is odd and $m_p=0$,} \\
(-1)^{\frac{p-1}{2}}\bar{\eta}(m_p)\sqrt{p^*}^{m+1}     & \mbox{ if $m$ is odd and $m_p\neq0$.}
\end{array}
\right.
\end{eqnarray*}
and
\begin{eqnarray*}
\sum_{y \in \gf(p)^*}\sigma_y(\zeta_p^{-\frac{m_p}{4}} )
&=&\left\{ \begin{array}{ll}
p-1     & \mbox{ if $m_p=0$,} \\
-1    & \mbox{ if $m_p\neq0$.}
\end{array}
\right.
\end{eqnarray*}
\end{lemma}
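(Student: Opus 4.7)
The plan is to reduce every sum to one of three elementary evaluations, namely $\sum_{y\in\gf(p)^*}\zeta_p^{y}=-1$, $\sum_{y\in\gf(p)^*}\bar{\eta}(y)=0$ (nontriviality of the quadratic character), and $G(\bar{\eta},\bar{\chi}_1)=\sqrt{p^*}$ from Lemma \ref{lem-32A1}. The only inputs needed beyond these are the explicit Galois actions given in Lemma \ref{cyclo}: $\sigma_y(\zeta_p)=\zeta_p^{y}$ and $\sigma_y(\sqrt{p^*})=\bar{\eta}(y)\sqrt{p^*}$.

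For statement (1), $\sigma_y(\zeta_p^{z})=\zeta_p^{yz}$. When $z=0$ every term equals $1$, giving $p-1$; when $z\neq 0$ the substitution $w=yz$ turns the sum into $\sum_{w\in\gf(p)^*}\zeta_p^{w}=-1$. For statement (2), $\sigma_y(\sqrt{p^*}^{m})=\bar{\eta}(y)^{m}\sqrt{p^*}^{m}$; since $\bar{\eta}$ has order $2$, $\bar{\eta}(y)^{m}$ equals $1$ when $m$ is even and $\bar{\eta}(y)$ when $m$ is odd, and the two cases follow from $\sum_{y}1=p-1$ and $\sum_{y}\bar{\eta}(y)=0$ respectively.

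Statement (3) combines both actions: $\sigma_y(\sqrt{p^*}\,\zeta_p^{z})=\bar{\eta}(y)\sqrt{p^*}\,\zeta_p^{yz}$. If $z=0$, the sum collapses to $\sqrt{p^*}\sum_{y}\bar{\eta}(y)=0$. If $z\neq 0$, substituting $w=yz$ and using $\bar{\eta}(z^{-1})=\bar{\eta}(z)$ produces
$$
\sqrt{p^*}\,\bar{\eta}(z)\sum_{w\in\gf(p)^*}\bar{\eta}(w)\zeta_p^{w}=\sqrt{p^*}\,\bar{\eta}(z)\,G(\bar{\eta},\bar{\chi}_1)=\bar{\eta}(z)\,p^{*},
$$
by Lemma \ref{lem-32A1}. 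For the particular cases, the second one is just (1) applied with $z=-m_p/4$, observing that $-m_p/4\equiv 0\pmod p$ iff $m_p=0$ since $4$ is a unit mod $p$. For the first, write $\sqrt{p^*}^{m}=(p^{*})^{(m-1)/2}\sqrt{p^*}$ when $m$ is odd and pull the scalar out; if $m_p=0$ the remaining sum is $0$ by (2) (or directly), while if $m_p\neq 0$ part (3) with $z=-m_p/4$ gives $\bar{\eta}(-m_p/4)\,p^{*}=(-1)^{(p-1)/2}\bar{\eta}(m_p)\,p^{*}$, using $\bar{\eta}(-1)=(-1)^{(p-1)/2}$ and $\bar{\eta}(4)=1$. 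Multiplying by $(p^{*})^{(m-1)/2}$ produces $(-1)^{(p-1)/2}\bar{\eta}(m_p)\sqrt{p^*}^{m+1}$, as required.

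No step is delicate; the entire lemma is routine bookkeeping with the Galois action of $\mathrm{Gal}(\mathbb{Q}(\zeta_p)/\mathbb{Q})$ on $\zeta_p$ and $\sqrt{p^*}$ together with standard character-sum identities, which is exactly why the paper omits the proof. The only point requiring a moment of care is verifying that $\bar{\eta}(-m_p/4)$ simplifies to $(-1)^{(p-1)/2}\bar{\eta}(m_p)$; this uses nothing beyond multiplicativity of $\bar{\eta}$ and $\bar{\eta}(4)=1$.
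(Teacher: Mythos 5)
Your proof is correct, and it is precisely the routine derivation from Lemma \ref{cyclo} that the paper has in mind when it omits the proof as ``straightforward'': reduce each sum via $\sigma_y(\zeta_p^z)=\zeta_p^{yz}$ and $\sigma_y(\sqrt{p^*})=\bar{\eta}(y)\sqrt{p^*}$ to the standard identities $\sum_{y\in\gf(p)^*}\zeta_p^{y}=-1$, $\sum_{y\in\gf(p)^*}\bar{\eta}(y)=0$, and $G(\bar{\eta},\bar{\chi}_1)=\sqrt{p^*}$. The bookkeeping in the particular cases, including $\bar{\eta}(-m_p/4)=(-1)^{(p-1)/2}\bar{\eta}(m_p)$, checks out.
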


The following result will play an important role in proving the main results of this paper.
\begin{lemma}\label{lem-32B3}
Let $b \in \gf(q)^*$, $L(x)=x^{p^2}+x$ and
\begin{eqnarray*}
M=\sum_{y \in \gf(p)^*} \sum_{z \in \gf(p)^*} \sum_{x \in \gf(q)}  \zeta_p^{\tr(yx^{p+1}+(bz-y)x)}.
\end{eqnarray*}

(\uppercase\expandafter{\romannumeral1}) If $m$ is odd, then we have the following.

\begin{itemize}
  \item When $m_p=0$,
\begin{eqnarray*}
M
&=&\left\{ \begin{array}{ll}
0     & \mbox{ if $\tr(x_b^{p+1})=0$,} \\
(-1)^{\frac{p-1}{2}}(p-1)\sqrt{p^*}^{m+1}              & \mbox{ if $\tr(x_b^{p+1}) \in$  $\textup{SQ}$ and $\tr(x_b)=0$,}\\
-(-1)^{\frac{p-1}{2}}(p-1)\sqrt{p^*}^{m+1}   & \mbox{ if $\tr(x_b^{p+1})\in \textup{N\textup{SQ}}$  and $\tr(x_b)=0$,}\\
-(-1)^{\frac{p-1}{2}}\sqrt{p^*}^{m+1}    & \mbox{ if $\tr(x_b^{p+1})\in \textup{SQ}$ and $\tr(x_b)\neq 0$,}\\
(-1)^{\frac{p-1}{2}}\sqrt{p^*}^{m+1} & \mbox{ if $\tr(x_b^{p+1})\in \textup{N\textup{SQ}}$ and $\tr(x_b)\neq 0$.}
\end{array}
\right.
\end{eqnarray*}
  \item When $m_p\in \textup{SQ}$,
\begin{eqnarray*}
M
&=&\left\{ \begin{array}{ll}
(-1)^{\frac{p-1}{2}}(p-1)\sqrt{p^*}^{m+1}     & \mbox{ if $\tr(x_b^{p+1})=0$ and $\tr(x_b)=0$,} \\
-(-1)^{\frac{p-1}{2}}\sqrt{p^*}^{m+1}    & \mbox{ if $\tr(x_b^{p+1})=0$ and $\tr(x_b)\neq 0$,} \\
-2\cdot(-1)^{\frac{p-1}{2}}\sqrt{p^*}^{m+1}            & \mbox{ if $\tr(x_b^{p+1})\in \textup{SQ}$ and $\tr(x_b)=0$,}\\
0  & \mbox{ if $\tr(x_b^{p+1}) \in \textup{N\textup{SQ}} $,}\\
(p-2)(-1)^{\frac{p-1}{2}}\sqrt{p^*}^{m+1}~or~-2\cdot (-1)^{\frac{p-1}{2}}\sqrt{p^*}^{m+1}   & \mbox{ if $\tr(x_b^{p+1})\in \textup{SQ}$ and $\tr(x_b)\neq 0$.}
\end{array}
\right.
\end{eqnarray*}
  \item When $m_p\in \textup{N\textup{SQ}}$,
\begin{eqnarray*}
M
&=&\left\{ \begin{array}{ll}
-(-1)^{\frac{p-1}{2}}(p-1)\sqrt{p^*}^{m+1}     & \mbox{ if $\tr(x_b^{p+1})=0$ and $\tr(x_b)=0$,} \\
(-1)^{\frac{p-1}{2}}\sqrt{p^*}^{m+1}     & \mbox{ if $\tr(x_b^{p+1})=0$ and $\tr(x_b)\neq 0$,} \\
0              & \mbox{ if $\tr(x_b^{p+1})\in \textup{SQ}$,}\\
2\cdot(-1)^{\frac{p-1}{2}}\sqrt{p^*}^{m+1}  & \mbox{ if $\tr(x_b^{p+1})\in \textup{N\textup{SQ}}$ and $\tr(x_b)=0$,}\\
-(p-2)(-1)^{\frac{p-1}{2}}\sqrt{p^*}^{m+1}~or~2\cdot(-1)^{\frac{p-1}{2}}\sqrt{p^*}^{m+1}   & \mbox{ if $\tr(x_b^{p+1})\in \textup{N\textup{SQ}}$ and $\tr(x_b)\neq 0$.}
\end{array}
\right.
\end{eqnarray*}
\end{itemize}
where $x_b$ is the unique solution of the equation $x^{p^2}+x=-b^p$.

(\uppercase\expandafter{\romannumeral2}) If $m\equiv 2 \pmod{4}$, then we have the following.
\begin{itemize}
  \item When $m_p=0$,
  \begin{eqnarray*}
M
&=&\left\{ \begin{array}{ll}
-p^{\frac{m}{2}}(p-1)^2    & \mbox{ if $\tr(x_b^{p+1})=0$ and $\tr(x_b)=0$,} \\
p^{\frac{m}{2}}(p-1)       & \mbox{ if $\tr(x_b^{p+1})=0$ and $\tr(x_b)\neq 0$} \\
~~~~~                       &\mbox{or $\tr(x_b^{p+1})\neq 0$ and $\tr(x_b)= 0$,} \\
-p^{\frac{m}{2}} & \mbox{ if $\tr(x_b^{p+1})\neq 0$ and $\tr(x_b)\neq 0$.}
\end{array}
\right.
\end{eqnarray*}

  \item When $m_p\neq0$,
  \begin{eqnarray*}
M
&=&\left\{ \begin{array}{ll}
p^{\frac{m}{2}}(p-1)   & \mbox{ if $\tr(x_b^{p+1})=0$ and $\tr(x_b)=0$,} \\
-p^{\frac{m}{2}}       & \mbox{ if $\tr(x_b^{p+1})=0$ and $\tr(x_b)\neq 0$}\\
~~~&\mbox{ or $\tr(x_b^{p+1})\neq 0$ and $A= 0$,} \\
p^{\frac{m}{2}}(p-1) ~or ~ -p^{\frac{m}{2}}(p+1) & \mbox{ if $\tr(x_b^{p+1})\neq 0$ and $A\neq 0$,}
\end{array}
\right.
\end{eqnarray*}
\end{itemize}
where $A=-\frac{m_p}{4}+\frac{\tr(x_b)^2}{4\tr(x_b^{p+1})}$ and $x_b$ is the unique solution of the equation $x^{p^2}+x=-b^p$.

(\uppercase\expandafter{\romannumeral3}) If $m \equiv 0 \pmod{4}$, then we have the following.
\begin{itemize}
  \item When $m_p=0$,
\begin{eqnarray*}
M
&=&\left\{ \begin{array}{ll}
0     & \mbox{ if $b\notin \textup{Im}(L)$,} \\
-p^{\frac{m}{2}+1}(p-1)^2    & \mbox{ if $b\in \textup{Im}(L)$, $\tr(x_b^{p+1})=0$ and $\tr(x_b)=0$,} \\
p^{\frac{m}{2}+1}(p-1)       & \mbox{ if $b\in \textup{Im}(L)$, $\tr(x_b^{p+1})=0$ and $\tr(x_b)\neq 0$} \\
~~       & \mbox{~~~ ~~~~~~~~~~or $\tr(x_b^{p+1})\neq 0$ and $\tr(x_b)= 0$,} \\
-p^{\frac{m}{2}+1} & \mbox{ if $b\in \textup{Im}(L)$, $\tr(x_b^{p+1})\neq 0$ and $\tr(x_b)\neq 0$.}
\end{array}
\right.
\end{eqnarray*}

\item
When $m_p\neq0$,
\begin{eqnarray*}
M
&=&\left\{ \begin{array}{ll}
0     & \mbox{ if $b\notin \textup{Im}(L)$,} \\
p^{\frac{m}{2}+1}(p-1)    & \mbox{ if $b\in \textup{Im}(L)$, $\tr(x_b^{p+1})=0$ and $\tr(x_b)=0$,} \\
-p^{\frac{m}{2}+1}       & \mbox{ if $b\in \textup{Im}(L)$, $\tr(x_b^{p+1})=0$ and $\tr(x_b)\neq 0$} \\
~~       & \mbox{~~~ ~~~~~~~~~~or $\tr(x_b^{p+1})\neq 0$ and $A=0$,} \\
p^{\frac{m}{2}+1}(p-1) ~or~ -p^{\frac{m}{2}+1}(p+1)  & \mbox{ if $b\in \textup{Im}(L)$, $\tr(x_b^{p+1})\neq 0$ and $A\neq 0$.}
\end{array}
\right.
\end{eqnarray*}
\end{itemize}
where $A=-\frac{m_p}{4}+\frac{\tr(x_b)^2}{4\tr(x_b^{p+1})}$ and $x_b$ is some solution of the equation $L(x)=-b^p$ under the condition that $b\in \textup{Im}(L)$.
\end{lemma}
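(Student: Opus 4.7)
The plan is to evaluate $M$ as an iterated sum: first collapse the $x$-sum to $S(y,bz-y)$ via the definition in Section \ref{sec-pr}, then apply Lemmas \ref{lem-Coulter1}--\ref{lem-Coulter2} to reduce $M$ to a double $\gf(p)$-sum, and finally invoke the Weil evaluation of Lemma \ref{lem-32A2} together with the Gauss sum identity of Lemma \ref{lem-32A1} to close the computation. For any $y \in \gf(p)^*$ one has $y^p = y$, so by the $\gf(p^2)$-linearity of $L$ the equation $y^p x^{p^2} + yx = -(bz-y)^p$ becomes $L(x) = -b^p z/y + 1$; since $L(1/2) = 1$, its solution is $x_{y,bz-y} = (z/y)x_b + 1/2$, where $x_b$ is the (or some fixed) solution of $L(x_b) = -b^p$. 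When $m \equiv 0 \pmod{4}$ and $b \notin \textup{Im}(L)$ this auxiliary equation is unsolvable for every pair $(y,z)$, forcing $S(y,bz-y) = 0$ by Lemma \ref{lem-Coulter2}(2) and hence $M = 0$, disposing of the first line of part (\uppercase\expandafter{\romannumeral3}).

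A direct expansion using $\tr(x_b^p)=\tr(x_b)$ and $\tr(1)=m_p$ in $\gf(p)$ then gives
\begin{equation*}
\tr\!\bigl(y\, x_{y,bz-y}^{p+1}\bigr) \;\equiv\; \tfrac{z^2}{y}\,\tr(x_b^{p+1}) \;+\; z\,\tr(x_b) \;+\; \tfrac{m_p y}{4}\pmod{p}.
\end{equation*}
Substituting the Coulter evaluation, and in the odd-$m$ case replacing $\eta(y)$ by $\bar{\eta}(y)$ through Lemma \ref{lem-bothcharac}, the triple sum collapses to
\begin{equation*}
M \;=\; C \sum_{y \in \gf(p)^*} \bar{\eta}(y)^{\varepsilon}\, \zeta_p^{-m_p y/4}\, \sum_{z \in \gf(p)^*} \zeta_p^{-T_1 z^2/y \,-\, T_2 z},
\end{equation*}
where $T_1 := \tr(x_b^{p+1})$, $T_2 := \tr(x_b)$, and $(C,\varepsilon)$ is $(\sqrt{p^*}^{\,m},1)$, $(-p^{m/2},0)$ or $(-p^{m/2+1},0)$ in the three parity classes of $m$.

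The inner $z$-sum is either a linear character sum (if $T_1 = 0$, worth $p-1$ or $-1$ according to whether $T_2 = 0$) or, after completing the square, evaluated by Lemma \ref{lem-32A2} over $\gf(p)$ as $\bar{\eta}(-T_1)\bar{\eta}(y)\sqrt{p^*}\,\zeta_p^{T_2^2 y/(4T_1)} - 1$ when $T_1 \neq 0$. The remaining $y$-sum then splits into $\sum_y \zeta_p^{Ay}$, equal to $p-1$ or $-1$ according as $A := T_2^2/(4T_1) - m_p/4$ vanishes or not, and $\sum_y \bar{\eta}(y)\zeta_p^{-m_p y/4}$, which by a Gauss-sum shift is $0$ when $m_p = 0$ and $\bar{\eta}(-m_p)\sqrt{p^*}$ otherwise. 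Reassembling these pieces, together with the identity $\bar{\eta}(-1)\,p^* = p$ coming from $p^* = (-1)^{(p-1)/2}p$, yields every entry of the tables.

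The main obstacle is the sheer case analysis. Each parity class of $m$ branches into three subcases according to $m_p \in \{0, \textup{SQ}, \textup{NSQ}\}$, and within each the pair $(T_1, T_2)$ branches further on the vanishing of $T_1$ and $T_2$ and on the quadratic class of $T_1$. The ``two-option'' entries such as $(p-2)(-1)^{(p-1)/2}\sqrt{p^*}^{\,m+1}$ versus $-2(-1)^{(p-1)/2}\sqrt{p^*}^{\,m+1}$ arise precisely because, once $T_1 \neq 0$ and $T_2 \neq 0$ with fixed types, the dichotomy $A=0$ versus $A \neq 0$ is a genuinely independent branching; one must therefore track $\bar{\eta}(-T_1)$, $\bar{\eta}(-m_p)$ and the overall sign $(-1)^{(p-1)/2}$ in tandem across every subcase, while checking that in certain type combinations (for instance $T_1 \in \textup{NSQ}$ with $m_p \in \textup{SQ}$) the equality $A=0$ is arithmetically impossible, forcing the cancellation $M = 0$.
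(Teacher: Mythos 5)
Your proposal is correct and follows essentially the same route as the paper: collapse the inner sum via Coulter's evaluations using the solution $(z/y)x_b+\tfrac12$ of the linearized equation, expand $\tr\bigl(y\,x_{y,bz-y}^{p+1}\bigr)$ into the quadratic $\tfrac{z^2}{y}T_1+zT_2+\tfrac{m_p y}{4}$, evaluate the $z$-sum by Lemma \ref{lem-32A2}, and finish with the character sums in $y$ (including the observation that $b\notin\textup{Im}(L)$ kills every term when $m\equiv 0\pmod 4$, and that $A=0$ is impossible when $\bar{\eta}(m_pT_1)=-1$). The only difference is cosmetic: the paper first substitutes $z\mapsto zy$ and packages the $y$-sum as Galois automorphisms $\sigma_y$ acting on $\sqrt{p^*}^{\,m}\zeta_p^{-m_p/4}$ (Lemma \ref{lem-A2}), whereas you evaluate the same sums $\sum_y\zeta_p^{Ay}$ and $\sum_y\bar{\eta}(y)\zeta_p^{cy}$ directly.
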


\begin{proof}
We have
\begin{eqnarray}\label{eqn-M1}
M &=& \sum_{y \in \gf(p)^*} \sum_{z \in \gf(p)^*} \sum_{x \in \gf(q)}  \zeta_p^{\tr(yx^{p+1}+(bz-y)x)} \nonumber \\
&=& \sum_{y \in \gf(p)^*} \sum_{z \in \gf(p)^*} \sum_{x \in \gf(q)}  \zeta_p^{y\tr(x^{p+1}+(\frac{z}{y}b-1)x)}   \nonumber \\
&=& \sum_{y \in \gf(p)^*} \sum_{z \in \gf(p)^*} \sum_{x \in \gf(q)}  \zeta_p^{y\tr(x^{p+1}+(zb-1)x)}   \nonumber \\
&=& \sum_{y \in \gf(p)^*} \sigma_y(\sum_{z \in \gf(p)^*} \sum_{x \in \gf(q)}  \zeta_p^{\tr(x^{p+1}+(bz-1)x)}).
\end{eqnarray}

 It is easily seen that $L(x)=x^{p^2}+x$ is a permutation polynomial over $\gf(q)$ if $m\equiv 2 \pmod{4}$ or $m$ is odd. Hence, $x_b$ is the unique solution of the equation $L(x)=-b^p$, while $zx_b+\frac{1}{2}$ is the unique solution of the equation $L(x)=-(bz-1)^p$ for $z\in \gf(p)^*$.

For $m\equiv 0 \pmod{4}$, although $L(x)=x^{p^2}+x$ is not a permutation polynomial over $\gf(q)$, $zx_b+\frac{1}{2}$ is the solution of the equation $L(x)=-(bz-1)^p$ for $z\in \gf(p)^*$ when $b\in \textup{Im}(L)$ with some solution $x_b$.

Therefore, by Lemmas \ref{lem-Coulter1} and \ref{lem-Coulter2}, Equation (\ref{eqn-M1}) becomes

\begin{eqnarray*}
M
&=&\left\{ \begin{array}{ll}
     \sum_{y \in \gf(p)^*} \sigma_y(\sum_{z \in \gf(p)^*} \sqrt{p^*}^m~\eta(1)\zeta_p^{-\tr((zx_b+\frac{1}{2})^{p+1})}) & \mbox{ if $m$ is odd} \\
\sum_{y \in \gf(p)^*} \sigma_y(\sum_{z \in \gf(p)^*} (-1)^{\frac{m}{2}}p^{\frac{m}{2}}\zeta_p^{-\tr((zx_b+\frac{1}{2})^{p+1})}) & \mbox{ if $m\equiv2 (\rm{~mod~}4)$}\\
0                        & \mbox{ if $m\equiv0 (\rm{~mod~}4)$ and $b\notin \textup{Im}(L)$}\\
\sum_{y \in \gf(p)^*} \sigma_y(\sum_{z \in \gf(p)^*} -(-1)^{\frac{m}{2}}p^{\frac{m}{2}+1}\zeta_p^{-\tr((zx_b+\frac{1}{2})^{p+1})}) & \mbox{ if $m\equiv0 (\rm{~mod~}4)$ and $b\in \textup{Im}(L)$}
\end{array}
\right.\\
&=&\left\{ \begin{array}{ll}
\sum_{y \in \gf(p)^*} \sigma_y(\sqrt{p^*}^m~\zeta_p^{-\frac{m_p}{4}}\sum_{z \in \gf(p)^*} \zeta_p^{-\tr(x_b^{p+1})z^2-\tr(x_b)z})  & \mbox{ if $m$ is odd,} \\
\sum_{y \in \gf(p)^*} \sigma_y(-p^{\frac{m}{2}}~\zeta_p^{-\frac{m_p}{4}}\sum_{z \in \gf(p)^*} \zeta_p^{-\tr(x_b^{p+1})z^2-\tr(x_b)z}) & \mbox{ if $m\equiv2 (\rm{~mod~}4)$,}\\
0                                                     & \mbox{ if $m\equiv0 (\rm{~mod~}4)$ and $b\notin \textup{Im}(L)$,}\\
\sum_{y \in \gf(p)^*} \sigma_y(-p^{\frac{m}{2}+1}~\zeta_p^{-\frac{m_p}{4}}\sum_{z \in \gf(p)^*} \zeta_p^{-\tr(x_b^{p+1})z^2-\tr(x_b)z}) & \mbox{ if $m\equiv0 (\rm{~mod~}4)$and $b\in \textup{Im}(L)$,}
\end{array}
\right.
\end{eqnarray*}
We distinguish the following three cases.

(\uppercase\expandafter{\romannumeral1}) If $m$ is odd, by Lemma \ref{lem-32A2}, we have
\begin{eqnarray*}
M
&=&\left\{ \begin{array}{ll}
\sum_{y \in \gf(p)^*} \sigma_y(\sqrt{p^*}^m~\zeta_p^{-\frac{m_p}{4}}\sum_{z \in \gf(p)^*} \zeta_p^{-\tr(x_b)z})     & \mbox{ if $\tr(x_b^{p+1})=0$} \\
\sum_{y \in \gf(p)^*} \sigma_y(\sqrt{p^*}^m~\zeta_p^{-\frac{m_p}{4}}(\bar{\eta}(-\tr(x_b^{p+1})\zeta_p^{\frac{\tr(x_b)^2}{4\tr(x_b^{p+1})}}G(\bar{\eta},\bar{\chi}_1)-1))    & \mbox{ if $\tr(x_b^{p+1})\neq 0$}
\end{array}
\right.\\
&=&\left\{ \begin{array}{ll}
(p-1)A_0     & \mbox{ if $\tr(x_b^{p+1})=0$ and $\tr(x_b)=0$,} \\
-A_0     & \mbox{ if $\tr(x_b^{p+1})=0$ and $\tr(x_b)\neq 0$,} \\
\sqrt{p^*}^{m+1}\bar{\eta}(-\tr(x_b^{p+1})\sum_{y \in \gf(p)^*} \sigma_y(\zeta_p^{-\frac{m_p}{4}})-A_0             & \mbox{ if $\tr(x_b^{p+1})\neq 0$ and $\tr(x_b)=0$,}\\
\sqrt{p^*}^{m+1}\bar{\eta}(-\tr(x_b^{p+1})\sum_{y \in \gf(p)^*} \sigma_y(\zeta_p^{~A})-A_0               & \mbox{ if $\tr(x_b^{p+1})\neq 0$ and $\tr(x_b)\neq 0$,}
\end{array}
\right.
\end{eqnarray*}
where $A_0=\sum_{y \in \gf(p)^*} \sigma_y(\sqrt{p^*}^m~\zeta_p^{-\frac{m_p}{4}})$ and $A=-\frac{m_p}{4}+\frac{\tr(x_b)^2}{4\tr(x_b^{p+1})}$.
The desired conclusion in Part (\uppercase\expandafter{\romannumeral1}) of this lemma then follows from Lemma \ref{lem-A2}.

(\uppercase\expandafter{\romannumeral2}) If $m\equiv  2 \pmod{4}$, by Lemmas \ref{lem-32A2} and \ref{lem-32A1}, we have
\begin{eqnarray*}
M &=&
\left\{ \begin{array}{ll}
\sum_{y \in \gf(p)^*} \sigma_y(-p^{\frac{m}{2}}~\zeta_p^{-\frac{m_p}{4}}\sum_{z \in \gf(p)^*} \zeta_p^{-\tr(x_b)z})     & \mbox{ if $\tr(x_b^{p+1})=0$} \\
\sum_{y \in \gf(p)^*} \sigma_y(-p^{\frac{m}{2}}~\zeta_p^{-\frac{m_p}{4}}(\sum_{z \in \gf(p)} \zeta_p^{-\tr(x_b^{p+1})z^2-\tr(x_b)z}-1))              & \mbox{ if $\tr(x_b^{p+1})\neq 0$}
\end{array}
\right.\\
&=&\left\{ \begin{array}{ll}
-p^{\frac{m}{2}}\sum_{y \in \gf(p)^*} \sigma_y(~\zeta_p^{-\frac{m_p}{4}}\sum_{z \in \gf(p)^*} \zeta_p^{-\tr(x_b)z})     & \mbox{ if $\tr(x_b^{p+1})=0$} \\
-p^{\frac{m}{2}}\sum_{y \in \gf(p)^*} \sigma_y(~\zeta_p^{-\frac{m_p}{4}}(\bar{\eta}(-\tr(x_b^{p+1})\zeta_p^{\frac{\tr(x_b)^2}{4\tr(x_b^{p+1})}}G(\bar{\eta},\bar{\chi}_1)-1))              & \mbox{ if $\tr(x_b^{p+1})\neq 0$}
\end{array}
\right.\\
&=&\left\{ \begin{array}{ll}
-p^{\frac{m}{2}}(p-1)A_1    & \mbox{ if $\tr(x_b^{p+1})=0$ and $\tr(x_b)=0$,} \\
p^{\frac{m}{2}}A_1     & \mbox{ if $\tr(x_b^{p+1})=0$ and $\tr(x_b)\neq 0$,} \\
-p^{\frac{m}{2}}(-1)^{\frac{p-1}{2}}\bar{\eta}(\tr(x_b^{p+1}))\sum_{y \in \gf(p)^*} \sigma_y(\sqrt{p^*}\zeta_p^{-\frac{m_p}{4}})+p^{\frac{m}{2}}A_1              & \mbox{ if $\tr(x_b^{p+1})\neq 0$ and $\tr(x_b)=0$,}\\
-p^{\frac{m}{2}}(-1)^{\frac{p-1}{2}}\bar{\eta}(\tr(x_b^{p+1}))\sum_{y \in \gf(p)^*} \sigma_y(\sqrt{p^*}\zeta_p^{~A})+p^{\frac{m}{2}}A_1               & \mbox{ if $\tr(x_b^{p+1})\neq 0$ and $\tr(x_b)\neq 0$,}
\end{array}
\right.
\end{eqnarray*}
where $A_1=\sum_{y \in \gf(p)^*} \sigma_y(~\zeta_p^{-\frac{m_p}{4}})$ and $A=-\frac{m_p}{4}+\frac{\tr(x_b)^2}{4\tr(x_b^{p+1})}$.
The desired conclusion in Part (\uppercase\expandafter{\romannumeral2}) of this lemma then follows from Lemma \ref{lem-A2}.

(\uppercase\expandafter{\romannumeral3}) If $m\equiv  0 \pmod{4}$, then from Lemmas \ref{lem-32A2} and \ref{lem-32A1} we have

\begin{eqnarray*}
M &=&
\left\{ \begin{array}{ll}
0
~~~~~~~~~~~~~~~~~~~~~~~~~~~~~
~~~~~~~~~~~~~~~~~~~~~~~~~~~~~
~~~~~~~~~~~~~~~~~~~~~~~~~~~~~~~~~\mbox{ if $b\notin \textup{Im}(L)$}                        & \\
-p^{\frac{m}{2}+1}\sum_{y \in \gf(p)^*} \sigma_y(\zeta_p^{-\frac{m_p}{4}}\sum_{z \in \gf(p)^*} \zeta_p^{-\tr(x_b)z})
~~~~~~~~~~~~~~~~~~~~~~~\mbox{ if $b\in \textup{Im}(L)$ and $\tr(x_b^{p+1})=0$}                             &\\
-p^{\frac{m}{2}+1}\sum_{y \in \gf(p)^*} \sigma_y(\zeta_p^{-\frac{m_p}{4}}(\sum_{z \in \gf(p)} \zeta_p^{-\tr(x_b^{p+1})z^2-\tr(x_b)z}-1))
~\mbox{ if $b\in \textup{Im}(L)$ and $\tr(x_b^{p+1})\neq 0$} & \\
\end{array}
\right.\\
&=&\left\{ \begin{array}{ll}
0 &\mbox{ if $b\notin \textup{Im}(L)$,} \\
-p^{\frac{m}{2}+1}(p-1)A_1    &\mbox{ if $b\in \textup{Im}(L)$, $\tr(x_b^{p+1})=0$ and $\tr(x_b)=0$,} \\
p^{\frac{m}{2}+1}A_1    &\mbox{ if $b\in \textup{Im}(L)$, $\tr(x_b^{p+1})=0$ and $\tr(x_b)\neq 0$,} \\
-p^{\frac{m}{2}+1}
(-1)^{\frac{p-1}{2}}
\bar{\eta}(\tr(x_b^{p+1}))A
+p^{\frac{m}{2}+1}A_1
&\mbox{ if  $b\in \textup{Im}(L)$ and $\tr(x_b^{p+1})\neq 0$,}
\end{array}
\right.
\end{eqnarray*}
where $A_1=\sum_{y \in \gf(p)^*} \sigma_y(~\zeta_p^{-\frac{m_p}{4}})$ and
$A=\sum_{y \in \gf(p)^*} \sigma_y(\sqrt{p^*}\zeta_p^{~-\frac{m_p}{4}+
\frac{\tr(x_b)^2}{4\tr(x_b^{p+1})}})$.
The desired conclusion in Part (\uppercase\expandafter{\romannumeral3}) of this lemma then follows from Lemma \ref{lem-A2}.

This completes the proof of this lemma.
\end{proof}

In order to calculate the Hamming weight of each codeword in $\C_D$, we need the following result.

\begin{lemma}\label{lem-32B4}
For any $b \in \gf(q)^*$, let
$$
N(b)=|\{x \in \gf(q): \tr(x^{p+1}-x)=0 \mbox{ and } \tr(bx)=0\}|.
$$
There are three cases.

(\uppercase\expandafter{\romannumeral1}) If $m$ is odd, then we have the following.
\begin{description}
\item[$\bullet$] When $m_p=0$,
\begin{eqnarray*}
N(b)
&=&\left\{ \begin{array}{ll}
p^{m-2}     & \mbox{ if $\tr(x_b^{p+1})=0$,} \\
(p-1)B+p^{m-2}              & \mbox{ if $\tr(x_b^{p+1}) \in \textup{SQ}$ and $\tr(x_b)=0$,}\\
-(p-1)B+p^{m-2}   & \mbox{ if $\tr(x_b^{p+1})\in \textup{N\textup{SQ}}$ and $\tr(x_b)=0$,}\\
-B+p^{m-2}    & \mbox{ if $\tr(x_b^{p+1})\in \textup{SQ}$ and $\tr(x_b)\neq 0$,}\\
B+p^{m-2}   & \mbox{ if $\tr(x_b^{p+1})\in \textup{N\textup{SQ}}$ and $\tr(x_b)\neq 0$.}
\end{array}
\right.
\end{eqnarray*}
\item[$\bullet$]  When $\bar{\eta}(m_p)=1$,
\begin{eqnarray*}
N(b)
&=&\left\{ \begin{array}{ll}
pB+{p}^{m-2}     & \mbox{ if $\tr(x_b^{p+1})=0$ and $\tr(x_b)=0$,} \\
{p}^{m-2}              & \mbox{ if $\tr(x_b^{p+1})=0$ and $\tr(x_b)\neq 0$,} \\
-B+{p}^{m-2}            & \mbox{ if $\tr(x_b^{p+1})\in \textup{SQ}$ and $\tr(x_b)=0$,}\\
B+{p}^{m-2}             & \mbox{ if $\tr(x_b^{p+1}) \in \textup{N\textup{SQ}}$,}\\
(p-1)B+{p}^{m-2}~or~-B+{p}^{m-2}   & \mbox{ if $\tr(x_b^{p+1})\in \textup{SQ}$ and $\tr(x_b)\neq 0$.}
\end{array}
\right.
\end{eqnarray*}
\item[$\bullet$]
When $\bar{\eta}(m_p)=-1$,
\begin{eqnarray*}
N(b)
&=&\left\{ \begin{array}{ll}
-p B+ {p}^{m-2}     & \mbox{ if $\tr(x_b^{p+1})=0$ and $\tr(x_b)=0$,} \\
{p}^{m-2}     & \mbox{ if $\tr(x_b^{p+1})=0$ and $\tr(x_b)\neq 0$,} \\
-B+{p}^{m-2}              & \mbox{ if $\tr(x_b^{p+1})\in \textup{SQ}$,}\\
B+{p}^{m-2}  & \mbox{ if $\tr(x_b^{p+1})\in \textup{N\textup{SQ}}$ and $\tr(x_b)=0$,}\\
-(p-1)B+{p}^{m-2}~or~B+{p}^{m-2}   & \mbox{ if $\tr(x_b^{p+1})\in \textup{N\textup{SQ}}$ and $\tr(x_b)\neq 0$.}
\end{array}
\right.
\end{eqnarray*}
\end{description}
where $B=(-1)^{\frac{p-1}{2}\frac{m-1}{2}}p^{\frac{m-3}{2}}$.

(\uppercase\expandafter{\romannumeral2}) If $m\equiv 2 \pmod{4}$, then we have the following.
\begin{description}
  \item[$\bullet$] When $m_p=0$,
  \begin{eqnarray*}
N(b)
&=&\left\{ \begin{array}{ll}
-(p-1)p^{\frac{m}{2}-1}+p^{m-2}    & \mbox{ if $\tr(x_b^{p+1})=0$ and $\tr(x_b)=0$,} \\
p^{m-2}       & \mbox{ if $\tr(x_b^{p+1})=0$ and $\tr(x_b)\neq 0$,} \\
~~            & \mbox{or $\tr(x_b^{p+1})\neq 0$ and $\tr(x_b)= 0$,} \\
-p^{\frac{m}{2}-1}+p^{m-2} & \mbox{ if $\tr(x_b^{p+1})\neq 0$ and $\tr(x_b)\neq 0$.}
\end{array}
\right.
\end{eqnarray*}

   \item[$\bullet$] When $m_p\neq0$,
  \begin{eqnarray*}
N(b)
&=&\left\{ \begin{array}{ll}
p^{\frac{m}{2}-1}+p^{m-2}   & \mbox{ if $\tr(x_b^{p+1})=0$ and $\tr(x_b)=0$,} \\
p^{m-2}      & \mbox{ if $\tr(x_b^{p+1})=0$ and $\tr(x_b)\neq 0$,} \\
~~           & \mbox{or $\tr(x_b^{p+1})\neq 0$ and $A= 0$,} \\
p^{\frac{m}{2}-1}+p^{m-2} ~or ~ -p^{\frac{m}{2}-1}+p^{m-2} & \mbox{ if $\tr(x_b^{p+1})\neq 0$ and $A\neq 0$,}
\end{array}
\right.
\end{eqnarray*}
\end{description}
where $A=-\frac{m_p}{4}+\frac{\tr(x_b)^2}{4\tr(x_b^{p+1})}$ and $x_b$ is the unique solution of the equation $x^{p^2}+x=-b^p$.

(\uppercase\expandafter{\romannumeral3}) If $m\equiv 0 \pmod{4}$, then we have the following.
\begin{description}
\item[$\bullet$] When $m_p=0$,
\begin{eqnarray*}
N(b)
&=&\left\{ \begin{array}{ll}
-(p-1)p^{\frac{m}{2}-1}+p^{m-2}     & \mbox{ if $b\notin \textup{Im}(L)$,} \\
-(p-1)p^{\frac{m}{2}}+p^{m-2}    & \mbox{ if $b\in \textup{Im}(L)$, $\tr(x_b^{p+1})=0$ and $\tr(x_b)=0$,} \\
p^{m-2}       & \mbox{ if $b\in \textup{Im}(L)$, $\tr(x_b^{p+1})=0$ and $\tr(x_b)\neq 0$,} \\
~~       & \mbox{~~~ ~~~~~~~~~~or $\tr(x_b^{p+1})\neq 0$ and $\tr(x_b)= 0$,} \\
-p^{\frac{m}{2}}+p^{m-2} & \mbox{ if $b\in \textup{Im}(L)$, $\tr(x_b^{p+1})\neq 0$ and $\tr(x_b)\neq 0$.}
\end{array}
\right.
\end{eqnarray*}

\item[$\bullet$]
When $m_p\neq0$,
\begin{eqnarray*}
N(b)
&=&\left\{ \begin{array}{ll}
p^{\frac{m}{2}-1}+p^{m-2}     & \mbox{ if $b\notin \textup{Im}(L)$,} \\
p^{\frac{m}{2}}+p^{m-2}    & \mbox{ if $b\in \textup{Im}(L)$, $\tr(x_b^{p+1})=0$ and $\tr(x_b)=0$,} \\
p^{m-2}       & \mbox{ if $b\in \textup{Im}(L)$, $\tr(x_b^{p+1})=0$ and $\tr(x_b)\neq 0$,} \\
~~       & \mbox{~~~~~~~~~~~~~ or $\tr(x_b^{p+1})\neq 0$ and $A=0$,} \\
p^{\frac{m}{2}}+p^{m-2} ~or ~ -p^{\frac{m}{2}}+p^{m-2}  & \mbox{ if $b\in \textup{Im}(L)$, $\tr(x_b^{p+1})\neq 0$ and $A\neq 0$.}
\end{array}
\right.
\end{eqnarray*}
\end{description}
where $A=-\frac{m_p}{4}+\frac{\tr(x_b)^2}{4\tr(x_b^{p+1})}$ and $x_b$ is some solution of the equation $L(x)=x^{p^2}+x=-b^p$ when $b\in \textup{Im}(L)$.
\end{lemma}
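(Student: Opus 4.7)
\smallskip

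The plan is to evaluate $N(b)$ directly by orthogonality of additive characters over $\gf(p)$ and then reduce everything to sums already computed in Lemmas~\ref{lem-32B1} and \ref{lem-32B3}. Namely, using that $\frac{1}{p}\sum_{y\in\gf(p)}\zeta_p^{yt}$ is $1$ if $t=0$ and $0$ otherwise, I would write
\begin{equation*}
N(b)=\frac{1}{p^2}\sum_{x\in\gf(q)}\Bigl(\sum_{y\in\gf(p)}\zeta_p^{y\tr(x^{p+1}-x)}\Bigr)\Bigl(\sum_{z\in\gf(p)}\zeta_p^{z\tr(bx)}\Bigr).
\end{equation*}
Splitting according to whether $y=0$ and whether $z=0$, the $(y,z)=(0,0)$ term contributes $p^m/p^2=p^{m-2}$, the $y=0,\,z\neq 0$ term vanishes because $b\neq 0$ (it is $\frac{1}{p^2}\sum_{z\neq 0}\sum_x\chi_1(zbx)=0$), the $y\neq 0,\,z=0$ term is $\frac{1}{p^2}\sum_{y\in\gf(p)^*}S(y,-y)$, and the $y\neq 0,\,z\neq 0$ term is precisely $M/p^2$ in the notation of Lemma~\ref{lem-32B3}. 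Thus
\begin{equation*}
N(b)=p^{m-2}+\frac{1}{p^2}\sum_{y\in\gf(p)^*}S(y,-y)+\frac{M}{p^2}.
\end{equation*}

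Next I would plug in the value of $\sum_{y\in\gf(p)^*}S(y,-y)$ from Lemma~\ref{lem-32B1}, observing that $\sqrt{p^{*}}^{m+1}=(-1)^{\frac{p-1}{2}\cdot\frac{m+1}{2}}p^{\frac{m+1}{2}}$ in the odd-$m$ case, so that after dividing by $p^2$ one obtains the constant term $(-1)^{\frac{p-1}{2}\cdot\frac{m-1}{2}}\bar{\eta}(m_p)p^{\frac{m-3}{2}}$ (that is, the quantity $B$ up to the sign $\bar{\eta}(m_p)$). Analogous simplifications in the two even-$m$ subcases produce the uniform shifts $\pm(p-1)p^{m/2-1}$ or $\pm p^{m/2-1}$ that appear in the first row of the $m$ even parts of the statement.

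Then I would invoke Lemma~\ref{lem-32B3} for the $M/p^2$ term, case by case. In each branch of that lemma the value of $M$ is already a rational integer multiple of $p^{(m+1)/2}\sqrt{p^{*}}$, $p^{m/2}$ or $p^{m/2+1}$ (no complex units remain after summing the Galois conjugates), so division by $p^2$ gives an integer matching the tables; for example in Part~(I) with $m_p=0$ and $\tr(x_b^{p+1})\in\mathrm{SQ}$, $\tr(x_b)=0$ the contribution is $\frac{1}{p^2}\cdot(-1)^{(p-1)/2}(p-1)\sqrt{p^{*}}^{m+1}=(p-1)B$, and so on throughout. The $b\notin\image(L)$ subcase of $m\equiv 0\pmod{4}$ requires a separate accounting because then $M=0$ but $\sum_y S(y,-y)$ has a different normalization ($p^{m/2+1}$ rather than $p^{m/2}$); collecting all the pieces yields the two values $-(p-1)p^{m/2-1}+p^{m-2}$ or $p^{m/2-1}+p^{m-2}$ displayed for that row.

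\smallskip

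The main obstacle is bookkeeping: aligning the many subcases of Lemma~\ref{lem-32B3} (distinguishing whether $\tr(x_b^{p+1})$ is $0$, a square or a nonsquare, whether $\tr(x_b)=0$, whether $A=0$, and whether $b\in\image(L)$) with the correct branch of Lemma~\ref{lem-32B1} and then verifying that the constants $(-1)^{\frac{p-1}{2}\cdot\frac{m-1}{2}}$, $\bar{\eta}(m_p)$, $\sqrt{p^{*}}^{m+1}$ combine to give the clean expressions in the three tables. The ``or'' values in Parts~(I)--(III) arise because Lemma~\ref{lem-32B3} already contains ``or'' entries corresponding to whether $A\in\mathrm{SQ}$ or $A\in\mathrm{NSQ}$; I would carry these alternatives through the arithmetic unchanged. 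No new sums need to be evaluated, so the argument reduces to a finite case check once the decomposition above is in place.
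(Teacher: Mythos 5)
Your proposal is correct and follows essentially the same route as the paper: the same character-sum decomposition $N(b)=p^{m-2}+\frac{1}{p^2}\sum_{y\in\gf(p)^*}S(y,-y)+\frac{M}{p^2}$, the same appeal to Lemmas~\ref{lem-32B1} and \ref{lem-32B3}, and the same normalization $(-1)^{\frac{p-1}{2}}\sqrt{p^*}^{m+1}=(-1)^{\frac{p-1}{2}\frac{m-1}{2}}p^{\frac{m+1}{2}}$ to convert the Gauss-sum constants into the quantity $B$ appearing in the tables. The remaining work is exactly the finite case check you describe, so nothing is missing.
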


\begin{proof}
By definition, we have
\begin{eqnarray*}
N(b)
&=& p^{-2} \sum_{x \in \gf(q)} \left( \sum_{y \in \gf(p)} \zeta_p^{y\tr(x^{p+1}-x)} \right)
                                                   \left( \sum_{z \in \gf(p)} \zeta_p^{z\tr(bx)} \right) \\
&=& p^{-2}  \sum_{z \in \gf(p)^*} \sum_{x \in \gf(q)}  \zeta_p^{\tr(bzx)}  +
        p^{-2}  \sum_{y \in \gf(p)^*} \sum_{x \in \gf(q)}  \zeta_p^{\tr(yx^{p+1}-yx)} + \\
& & p^{-2}   \sum_{y \in \gf(p)^*}  \sum_{z \in \gf(p)^*} \sum_{x \in \gf(q)}  \zeta_p^{\tr(yx^{p+1}+(bz-y)x)} + p^{m-2}.
\end{eqnarray*}
Note that
$$
\sum_{z \in \gf(p)^*} \sum_{x \in \gf(q)}  \zeta_p^{\tr(bzx)} =0.
$$
The desired conclusions then follow from Lemmas \ref{lem-32B1} and \ref{lem-32B3}, and the fact that
\begin{eqnarray*}
(-1)^{\frac{p-1}{2}}\sqrt{p^*}^{m+1}=(-1)^{{\frac{p-1}{2}\frac{m-1}{2}}}p^{\frac{m+1}{2}}.
\end{eqnarray*}
\end{proof}

In order to calculate the frequency of each weight in $\C_D$, we need a few more auxiliary results which are given and proved in the following three lemmas.

\begin{lemma}\label{lem-F1}
For any $a \in \gf(p)$, let
$$
N_a=|\{x \in \gf(q): \tr(x^{p+1})=a\}|.
$$
Then
\begin{eqnarray*}
N_a = \left\{ \begin{array}{ll}
p^{m-1}                               & \mbox{ if $m$ is odd and $a=0$,} \\
p^{m-1}+p^{-1}\sqrt{p^*}^{m+1} \bar{\eta}(-a)                              & \mbox{ if $m$ is odd and $a\neq 0$,} \\
p^{m-1}-p^{\frac{m}{2}-1}(p-1)        & \mbox{ if $m\equiv 2 (\rm{~mod~}4)$ and $a=0$,} \\
p^{m-1}+p^{\frac{m}{2}-1}        & \mbox{ if $m\equiv 2 (\rm{~mod~}4)$ and $a\neq 0$,} \\
p^{m-1}-p^{\frac{m}{2}}(p-1)             & \mbox{ if $m\equiv0 (\rm{~mod~}4)$ and $a=0$,}\\
p^{m-1}+p^{\frac{m}{2}}            & \mbox{ if $m\equiv0 (\rm{~mod~}4)$ and $a \neq 0$.}
\end{array}
\right.
\end{eqnarray*}
\end{lemma}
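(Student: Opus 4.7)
The plan is to use the standard additive-character trick to count $N_a$ and then reduce everything to the exponential sum $S(y,0)$, whose value is supplied by Lemmas \ref{lem-Coulter1} and \ref{lem-Coulter2}. First I would write
\[
N_a = \frac{1}{p}\sum_{x \in \gf(q)} \sum_{y \in \gf(p)} \zeta_p^{y(\tr(x^{p+1})-a)}
     = p^{m-1} + \frac{1}{p}\sum_{y \in \gf(p)^*} \zeta_p^{-ya}\, S(y,0),
\]
separating the $y=0$ contribution. The whole computation boils down to evaluating the inner sum over $y \in \gf(p)^*$ in the three parity regimes of $m$.

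For $m$ odd, Lemma \ref{lem-Coulter1} gives $S(y,0)=\sqrt{p^*}^m\eta(y)$, and Lemma \ref{lem-bothcharac} lets me replace $\eta$ by $\bar{\eta}$ on $\gf(p)^*$. The inner sum then becomes $\sqrt{p^*}^m\sum_{y\in\gf(p)^*}\bar{\eta}(y)\bar{\chi}_1(-ya)$, which vanishes for $a=0$ and equals $\bar{\eta}(-a)\,G(\bar{\eta},\bar{\chi}_1)=\bar{\eta}(-a)\sqrt{p^*}$ for $a\ne 0$ by a change of variable and Lemma \ref{lem-32A1}. Dividing by $p$ recovers the claimed formulas in the odd case.

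For $m$ even, the key auxiliary observation I need is that $y^{(q-1)/(p+1)}=1$ for every $y\in\gf(p)^*$. This follows because $\frac{q-1}{p+1}=\frac{p-1}{p+1}\bigl(p^{m-1}+p^{m-2}+\cdots+1\bigr)$, and the second factor is divisible by $p+1$ when $m$ is even (since $p\equiv -1\pmod{p+1}$ makes the alternating-like sum of $m$ terms vanish). Hence $(p-1)\mid(q-1)/(p+1)$, so $y^{(q-1)/(p+1)}=1$ for all $y\in\gf(p)^*$. Comparing with $(-1)^{m/2}$, the first case of Lemma \ref{lem-Coulter2} applies when $m\equiv 2\pmod 4$, yielding $S(y,0)=-p^{m/2}$ uniformly in $y$; the second case applies when $m\equiv 0\pmod 4$, yielding $S(y,0)=-p^{m/2+1}$ uniformly in $y$. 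Because $S(y,0)$ is then independent of $y$, the inner sum factors as $S(y_0,0)\sum_{y\in\gf(p)^*}\zeta_p^{-ya}$, which equals $(p-1)S(y_0,0)$ for $a=0$ and $-S(y_0,0)$ for $a\ne 0$. Dividing by $p$ produces exactly the four even-case values in the lemma.

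The main obstacle, and really the only non-mechanical step, is the arithmetic identity $(p-1)\mid (p^m-1)/(p+1)$ for even $m$ that ensures $y^{(q-1)/(p+1)}=1$ throughout $\gf(p)^*$; once this is in hand, both even cases collapse to a single uniform evaluation of $S(y,0)$ and the rest is a routine character-sum manipulation parallel to the odd case.
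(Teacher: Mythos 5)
Your proposal is correct and follows essentially the same route as the paper: the additive-character expansion reducing $N_a$ to $p^{m-1}+p^{-1}\sum_{y\in\gf(p)^*}\zeta_p^{-ya}S(y,0)$, followed by the case-by-case evaluation of $S(y,0)$ via Lemmas \ref{lem-Coulter1} and \ref{lem-Coulter2} and the standard character-sum identities. The only addition is that you explicitly verify $(p-1)\mid(q-1)/(p+1)$ for even $m$ so that $y^{(q-1)/(p+1)}=1$ on $\gf(p)^*$, a detail the paper's proof uses implicitly when selecting the relevant case of Lemma \ref{lem-Coulter2}; this is a welcome clarification, not a different argument.
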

\begin{proof}
By definition, Lemmas \ref{lem-Coulter1} and \ref{lem-Coulter2}, we have
\begin{eqnarray*}
N_a &=& \frac{1}{p} \sum_{x \in \gf(q)} \sum_{y \in \gf(p)} \zeta_p^{y(\tr(x^{p+1})-a)} \\
&=& p^{m-1}+ \frac{1}{p} \sum_{y \in \gf(p)^*} \zeta_p^{-ya}\sum_{x \in \gf(q)}   \zeta_p^{\tr(yx^{p+1})} \\
&=& p^{m-1}+ \frac{1}{p}\sum_{y \in \gf(p)^*} \zeta_p^{-ya} S(y,0)\\
&=& \left\{ \begin{array}{ll}
p^{m-1}+p^{-1}\sum_{y \in \gf(p)^*} \zeta_p^{-ya}\sqrt{p^*}^m \eta(y)                               & \mbox{ if $m$ is odd} \\
p^{m-1}+p^{-1}\sum_{y \in \gf(p)^*} \zeta_p^{-ya}(-p^{\frac{m}{2}})        & \mbox{ if $m\equiv2 (\rm{~mod~}4)$} \\
p^{m-1}+p^{-1}\sum_{y \in \gf(p)^*} \zeta_p^{-ya}(-p^{\frac{m}{2}+1})             & \mbox{ if $m\equiv0 (\rm{~mod~}4)$}
\end{array}
\right.\\
&=& \left\{ \begin{array}{ll}
p^{m-1}+p^{-1}\sqrt{p^*}^m \sum_{y \in \gf(p)^*} \bar{\eta}(y)                               & \mbox{ if $m$ is odd and $a=0$} \\
p^{m-1}+p^{-1}\sqrt{p^*}^m \bar{\eta}(-a)\sum_{y \in \gf(p)^*} \zeta_p^{-ya}\bar{\eta}(-ya)                               & \mbox{ if $m$ is odd and $a\neq 0$} \\
p^{m-1}+p^{-1}(-p^{\frac{m}{2}})\sum_{y \in \gf(p)^*} \zeta_p^{-ya}        & \mbox{ if $m\equiv2 (\rm{~mod~}4)$} \\
p^{m-1}+p^{-1}(-p^{\frac{m}{2}+1})\sum_{y \in \gf(p)^*} \zeta_p^{-ya}             & \mbox{ if $m\equiv0 (\rm{~mod~}4)$}
\end{array}
\right.\\
&=& \left\{ \begin{array}{ll}
p^{m-1}+p^{-1}\sqrt{p^*}^m \sum_{y \in \gf(p)^*} \bar{\eta}(y)                               & \mbox{ if $m$ is odd and $a=0$,} \\
p^{m-1}+p^{-1}\sqrt{p^*}^m \bar{\eta}(-a)G(\bar{\eta},\bar{\chi}_1)                               & \mbox{ if $m$ is odd and $a\neq 0$,} ~~~~~~~~~~~~\\
p^{m-1}-p^{\frac{m}{2}-1}\sum_{y \in \gf(p)^*} \zeta_p^{-ya}        & \mbox{ if $m\equiv 2(\rm{~mod~}4)$,} \\
p^{m-1}-p^{\frac{m}{2}}\sum_{y \in \gf(p)^*} \zeta_p^{-ya}             & \mbox{ if $m\equiv0 (\rm{~mod~}4)$.}
\end{array}
\right.
\end{eqnarray*}

Note that $\sum_{y \in \gf(p)^*} \bar{\eta}(y)=0$ and
\begin{eqnarray*}
\sum_{y \in \gf(p)^*} \zeta_p^{-ya} &=& \left\{ \begin{array}{ll}
p-1                               & \mbox{ if $a=0$,} \\
-1                               & \mbox{ if $a\neq 0$,}
\end{array}
\right.
\end{eqnarray*}

The desired conclusion then follows from Lemma \ref{lem-32A1}.
\end{proof}

\begin{lemma}\label{lem-F2}
For any $a\in \gf(p)$, let
$$
N_{(a,0)}=|\{x \in \gf(q): \tr(x^{p+1})=a ~and~ \tr(x)=0\}|.
$$
Then, for $m_p=0$, we have
\begin{eqnarray*}
N_{(a,0)}
&=&
\left\{ \begin{array}{ll}
p^{m-2}                                 & \mbox{ if $m$ is odd and $a=0$,} \\
p^{m-2} +
p^{-1}\sqrt{p^*}^{m+1}\bar{\eta}(-a)    & \mbox{ if $m$ is odd and $a\neq 0$,} \\
p^{m-2}
-p^{\frac{m}{2}-1}(p-1)       & \mbox{ if $m\equiv 2 (\rm{~mod~}4)$ and $a=0$,} \\
p^{m-2}
+p^{\frac{m}{2}-1}       & \mbox{ if $m\equiv 2 (\rm{~mod~}4)$ and $a\neq0$,} \\
p^{m-2}
-p^{\frac{m}{2}}(p-1)       & \mbox{ if $m\equiv0 (\rm{~mod~}4)$ and $a=0$,} \\
p^{m-2}
+p^{\frac{m}{2}}       & \mbox{ if $m\equiv0 (\rm{~mod~}4)$ and $a\neq0$.}
\end{array}
\right.
\end{eqnarray*}
and for $m_p \neq 0$, we have
\begin{eqnarray*}
N_{(a,0)} &=& \left\{ \begin{array}{ll}
p^{m-2}+p^{-2}\bar{\eta}(-m_p)\sqrt{p^*}^{m+1}(p-1)                               & \mbox{ if $m$ is odd and $a=0$,} \\
p^{m-2}-p^{-2}\bar{\eta}(-m_p)\sqrt{p^*}^{m+1}                               & \mbox{ if $m$ is odd and $a\neq0$,} \\
p^{m-2}                                                             & \mbox{ if $m$ is even and $a=0$,}\\
p^{m-2}
-\bar{\eta}(m_p)\bar{\eta}(a)(-1)^{\frac{p-1}{2}}p^{\frac{m}{2}-1}       & \mbox{ if $m\equiv 2 (\rm{~mod~}4)$ and $a\neq0$,} \\
p^{m-2}
-\bar{\eta}(m_p)\bar{\eta}(a)(-1)^{\frac{p-1}{2}}p^{\frac{m}{2}}  & \mbox{ if $m\equiv0 (\rm{~mod~}4)$ and $a\neq0$.}
\end{array}
\right.
\end{eqnarray*}
\end{lemma}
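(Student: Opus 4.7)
The plan is to repeat the orthogonality strategy used in Lemma \ref{lem-F1}, but now with two characters to capture the two trace conditions. Explicitly, I would write
\begin{eqnarray*}
N_{(a,0)} &=& \frac{1}{p^2} \sum_{x \in \gf(q)} \left(\sum_{y \in \gf(p)} \zeta_p^{y(\tr(x^{p+1})-a)}\right) \left(\sum_{z \in \gf(p)} \zeta_p^{z\tr(x)}\right) \\
&=& p^{m-2} + \frac{1}{p^2} T_3 + \frac{1}{p^2} T_4,
\end{eqnarray*}
where $T_3 = \sum_{y \in \gf(p)^*} \zeta_p^{-ya} S(y,0)$ and $T_4 = \sum_{y,z \in \gf(p)^*} \zeta_p^{-ya} S(y,z)$. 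The $y=0$, $z \neq 0$ piece vanishes by character orthogonality on $\gf(q)$.

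The sum $T_3$ is routine: for $y \in \gf(p)^*$, Lemma \ref{lem-bothcharac} gives $\eta(y)=\bar\eta(y)$ when $m$ is odd, while a short computation shows $y^{(q-1)/(p+1)}=1$ for all $m$ even, so Lemmas \ref{lem-Coulter1} and \ref{lem-Coulter2} yield $S(y,0) = \sqrt{p^*}^m\bar\eta(y)$, $-p^{m/2}$, or $-p^{m/2+1}$ respectively. Summing against $\zeta_p^{-ya}$ then reduces either to $\sum_y\bar\eta(y)=0$ and the Gauss sum $G(\bar\eta,\bar\chi_1)=\sqrt{p^*}$ (Lemma \ref{lem-32A1}) for $m$ odd, or to $\sum_y\zeta_p^{-ya}\in\{p-1,-1\}$ for $m$ even.

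The heart of the matter is $T_4$. The key trick is the substitution $x=-(z/y)u$ in $S(y,z)=\sum_x\chi_1(yx^{p+1}+zx)$; since $y,z\in\gf(p)$ satisfy $y^p=y$ and $z^p=z$, a direct calculation gives $S(y,z)=S(z^2/y,\,-z^2/y)$. Reindexing via $t=z^2/y$ (for each $t\in\gf(p)^*$, the fiber is parametrized by $z\in\gf(p)^*$ with $y=z^2/t$) converts $T_4$ into
\begin{eqnarray*}
T_4 = \sum_{t \in \gf(p)^*} S(t,-t) \sum_{z \in \gf(p)^*} \zeta_p^{-(z^2/t)a}.
\end{eqnarray*}
When $a=0$ the inner sum is $p-1$, and Lemma \ref{lem-32B1} finishes the computation. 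When $a\neq 0$, Lemma \ref{lem-32A2} evaluates the inner quadratic Gauss sum as $\bar\eta(-a/t)\sqrt{p^*}-1$, so
\begin{eqnarray*}
T_4 = \sqrt{p^*}\,\bar\eta(-a)\sum_{t \in \gf(p)^*}\bar\eta(t) S(t,-t) - \sum_{t \in \gf(p)^*} S(t,-t).
\end{eqnarray*}

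The main obstacle is then the computation of the twisted sum $\sum_{t\neq 0}\bar\eta(t)S(t,-t)$, which does not appear in Lemma \ref{lem-32B1}. To handle it, I would observe that $x=1/2$ is the (unique) solution of $x^{p^2}+x=1$, so Lemmas \ref{lem-Coulter1} and \ref{lem-Coulter2} give $S(t,-t)$ in the explicit form $\sqrt{p^*}^m\bar\eta(t)\zeta_p^{-tm_p/4}$ (or its $m$-even analogues $-p^{m/2}\zeta_p^{-tm_p/4}$, $-p^{m/2+1}\zeta_p^{-tm_p/4}$). Multiplying by $\bar\eta(t)$ kills the $\bar\eta(t)$ factor in the odd case and produces a pure linear character sum, while in the even cases a second application of Lemma \ref{lem-32A1} yields a factor $\bar\eta(-m_p)\sqrt{p^*}$ when $m_p\neq 0$ and $0$ when $m_p=0$. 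Assembling $p^{m-2}+T_3/p^2+T_4/p^2$ case-by-case on the parity of $m\bmod 4$, on whether $m_p=0$, and on whether $a=0$, and simplifying using $\bar\eta(-1)=(-1)^{(p-1)/2}$ and $p^*=(-1)^{(p-1)/2}p$, reproduces every entry in the two tables of the statement.
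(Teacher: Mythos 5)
Your proposal is correct; I checked several representative cases (e.g.\ $m$ odd, $m_p=0$, $a\neq 0$ and $m\equiv 2\pmod 4$, $m_p\neq 0$, $a\neq 0$) and your decomposition $p^{m-2}+p^{-2}T_3+p^{-2}T_4$ reproduces every entry of both tables. However, you organize the double character sum differently from the paper. The paper does \emph{not} split off the $z=0$ term: for each $y\neq 0$ it keeps the full inner sum over $z\in\gf(p)$, rescales $z\mapsto yz$ so that the whole bracket becomes $\sigma_y$ applied to $\zeta_p^{-a}\sum_{z}\sum_x\zeta_p^{\tr(x^{p+1}+zx)}$, evaluates the inner sum via Coulter with the explicit solution $-z/2$ of $x^{p^2}+x=-z^p$ (yielding the Gauss sum $\sum_z\zeta_p^{-m_pz^2/4}$ in the variable $z$), and then finishes by applying $\sum_y\sigma_y$ through its Lemma \ref{lem-A2}. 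Your route instead isolates $T_4=\sum_{y,z\neq 0}\zeta_p^{-ya}S(y,z)$ and uses the homogeneity substitution $x\mapsto -(z/y)u$, which correctly gives $S(y,z)=S(z^2/y,-z^2/y)$ (since $p+1$ is even and $y,z\in\gf(p)$), so that after reindexing by $t=z^2/y$ the inner Gauss sum is taken in $z$ against $a/t$ and the outer sum reduces to $\sum_tS(t,-t)$ -- already available from Lemma \ref{lem-32B1} -- plus the twisted sum $\sum_t\bar{\eta}(t)S(t,-t)$, which you evaluate correctly from Coulter's formulas with the solution $x=1/2$ of $x^{p^2}+x=1$. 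What the paper's route buys is uniformity: the Galois-averaging lemma handles $a=0$ and $a\neq 0$, and all residues of $m$ modulo $4$, in one stroke, and the same machinery is recycled in Lemmas \ref{lem-32B3} and \ref{lem-F3}. What your route buys is the direct reuse of Lemma \ref{lem-32B1} and the avoidance of the cyclotomic-field formalism altogether, at the cost of one extra auxiliary identity (the twisted sum) that does not appear elsewhere in the paper. Both are complete and correct proofs.
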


\begin{proof}
By definition, we have
\begin{eqnarray*}
N_{(a,0)}
&=& p^{-2} \sum_{x \in \gf(q)} \left( \sum_{y \in \gf(p)} \zeta_p^{y(\tr(x^{p+1})-a)} \right)
                                                   \left( \sum_{z \in \gf(p)} \zeta_p^{z\tr(x)} \right) \\
&=& p^{-2}  \sum_{z \in \gf(p)} \sum_{x \in \gf(q)}  \zeta_p^{\tr(zx)}  +
p^{-2}   \sum_{y \in \gf(p)^*}  \sum_{z \in \gf(p)} \sum_{x \in \gf(q)}  \zeta_p^{\tr(yx^{p+1}+zx-ya)}\\
&=& p^{-2} ( q+\sum_{z \in \gf(p)^*} \sum_{x \in \gf(q)}  \zeta_p^{\tr(zx)})  +
p^{-2}   \sum_{y \in \gf(p)^*}  \left(\zeta_p^{-ya}\sum_{z \in \gf(p)} \sum_{x \in \gf(q)}  \zeta_p^{y(\tr(x^{p+1}+\frac{z}{y}x))} \right).
\end{eqnarray*}
Note that
$$
\sum_{z \in \gf(p)^*} \sum_{x \in \gf(q)}  \zeta_p^{\tr(zx)} =0.
$$
Therefore,
\begin{eqnarray}\label{eqn-a0}
N_{(a,0)}
&=& p^{m-2}+
p^{-2}   \sum_{y \in \gf(p)^*}  \left(\zeta_p^{-ya}\sum_{z \in \gf(p)} \sum_{x \in \gf(q)}  \zeta_p^{y(\tr(x^{p+1}+zx))} \right) \nonumber \\
&=& p^{m-2} +
p^{-2}   \sum_{y \in \gf(p)^*}  \sigma_y \left(\zeta_p^{-a}\sum_{z \in \gf(p)} \sum_{x \in \gf(q)}  \zeta_p^{\tr(x^{p+1}+zx)} \right).
\end{eqnarray}
It is clear that $-\frac{1}{2}z$ is the solution of the equation $x^{p^2}+x=-z^p$ for any $z\in \gf(p)$. Hence, by Lemmas \ref{lem-Coulter1} and \ref{lem-Coulter2}, Equation (\ref{eqn-a0}) becomes
\begin{eqnarray}\label{eqn-a02}
N_{(a,0)}
&=& p^{m-2} +
p^{-2}   \sum_{y \in \gf(p)^*}  \sigma_y \left(\zeta_p^{-a}\sum_{z \in \gf(p)} \sum_{x \in \gf(q)}  \zeta_p^{\tr(x^{p+1}+zx)} \right) \nonumber \\
&=&
\left\{ \begin{array}{ll}
p^{m-2} +
p^{-2}\sum_{y \in \gf(p)^*} \sigma_y(\zeta_p^{-a}\sum_{z \in \gf(p)} \sqrt{p^*}^m \zeta_p^{-\frac{m_p}{4}z^2})    & \mbox{if $m$ is odd,} \\
p^{m-2} +
p^{-2}\sum_{y \in \gf(p)^*} \sigma_y(\zeta_p^{-a}\sum_{z \in \gf(p)} -p^{\frac{m}{2}} \zeta_p^{-\frac{m_p}{4}z^2})       & \mbox{if $m\equiv2 (\rm{~mod~}4)$,} \\
p^{m-2} +
p^{-2}\sum_{y \in \gf(p)^*} \sigma_y(\zeta_p^{-a}\sum_{z \in \gf(p)} -p^{\frac{m}{2}+1} \zeta_p^{-\frac{m_p}{4}z^2})          & \mbox{if $m\equiv0 (\rm{~mod~}4)$.}
\end{array}
\right.
\end{eqnarray}
We distinguish the following two cases.
\begin{enumerate}
  \item When $m_p=0$, Equation (\ref{eqn-a02}) becomes
\begin{eqnarray*}
N_{(a,0)}
&=&
\left\{ \begin{array}{ll}
p^{m-2} +
p^{-2}\sum_{y \in \gf(p)^*} \sigma_y(\zeta_p^{-a}\sum_{z \in \gf(p)} \sqrt{p^*}^m)    & \mbox{if $m$ is odd,} \\
p^{m-2} +
p^{-2}\sum_{y \in \gf(p)^*} \sigma_y(\zeta_p^{-a}\sum_{z \in \gf(p)} -p^{\frac{m}{2}})       & \mbox{if $m\equiv2 (\rm{~mod~}4)$,} \\
p^{m-2} +
p^{-2}\sum_{y \in \gf(p)^*} \sigma_y(\zeta_p^{-a}\sum_{z \in \gf(p)} -p^{\frac{m}{2}+1})          & \mbox{if $m\equiv0 (\rm{~mod~}4)$,}
\end{array}
\right.\\
&=&
\left\{ \begin{array}{ll}
p^{m-2} +
p^{-1}\sqrt{p^*}^{m-1}\sum_{y \in \gf(p)^*} \sigma_y(\zeta_p^{-a}\sqrt{p^*})    & \mbox{ if $m$ is odd,} \\
p^{m-2}
-p^{\frac{m}{2}-1}\sum_{y \in \gf(p)^*} \sigma_y(\zeta_p^{-a})       & \mbox{ if $m\equiv2 (\rm{~mod~}4)$,} \\
p^{m-2}
-p^{\frac{m}{2}}\sum_{y \in \gf(p)^*} \sigma_y(\zeta_p^{-a})          & \mbox{ if $m \equiv0 (\rm{~mod~}4)$.}
\end{array}
\right.
\end{eqnarray*}

\item When $m_p\neq0$, from Lemmas \ref{lem-32A2} and \ref{lem-32A1}, we can easily get
$$
\sum_{z \in \gf(p)}\zeta_p^{-\frac{m_p}{4}z^2}=\bar{\eta}(-m_p)\sqrt{p^*}.
$$
Thus, Equation (\ref{eqn-a02}) becomes
\begin{eqnarray*}
N_{(a,0)}
&=&
\left\{ \begin{array}{ll}
p^{m-2} +
p^{-2}\sum_{y \in \gf(p)^*} \sigma_y \left(\zeta_p^{-a}\bar{\eta}(-m_p)\sqrt{p^*}^{m+1}\right)    & \mbox{ if $m$ is odd,} \\
p^{m-2} +
p^{-2}\sum_{y \in \gf(p)^*} \sigma_y\left(\zeta_p^{-a}(-p^{\frac{m}{2}})\bar{\eta}(-m_p)\sqrt{p^*}\right)       & \mbox{ if $m\equiv2 (\rm{~mod~}4)$,} \\
p^{m-2} +
p^{-2}\sum_{y \in \gf(p)^*} \sigma_y\left(\zeta_p^{-a}(-p^{\frac{m}{2}+1})\bar{\eta}(-m_p)\sqrt{p^*}\right) & \mbox{ if $m\equiv0 (\rm{~mod~}4)$.}
\end{array}
\right.\\
&=&
\left\{ \begin{array}{ll}
p^{m-2} +
p^{-2}\bar{\eta}(-m_p)\sqrt{p^*}^{m+1}\sum_{y \in \gf(p)^*} \sigma_y \left(\zeta_p^{-a}\right)    & \mbox{ if $m$ is odd,} \\
p^{m-2}
-p^{\frac{m}{2}-2}\bar{\eta}(-m_p)\sum_{y \in \gf(p)^*} \sigma_y\left(\zeta_p^{-a}\sqrt{p^*}\right)       & \mbox{ if $m\equiv2 (\rm{~mod~}4)$,} \\
p^{m-2}
-p^{\frac{m}{2}-1}\bar{\eta}(-m_p)\sum_{y \in \gf(p)^*} \sigma_y\left(\zeta_p^{-a}\sqrt{p^*}\right) & \mbox{ if $m\equiv0 (\rm{~mod~}4)$.}
\end{array}
\right.
\end{eqnarray*}
\end{enumerate}
The desired conclusions then follow from the facts that
$$
\sum_{y \in \gf(p)^*} \sigma_y(\zeta_p^{-a})=
\left\{ \begin{array}{ll}
p-1   & \mbox{ if $a=0$ ,} \\
-1          & \mbox{ if $a\neq 0$,}
\end{array}
\right.
$$
and
$$
\sum_{y \in \gf(p)^*} \sigma_y(\zeta_p^{-a}\sqrt{p^*})=
\left\{ \begin{array}{ll}
0   & \mbox{ if $a=0$ ,} \\
{-a \overwithdelims () p}p^*          & \mbox{ if $a\neq 0$.}
\end{array}
\right.
$$
\end{proof}

\begin{lemma}\label{lem-F3}
Suppose that  $m_p\neq 0$. Let
$$
\bar{N}_0=|\{x \in \gf(q): \tr(x^{p+1})-\frac{1}{m_p}\tr(x)^2=0\}|.
$$
Then
\begin{eqnarray*}
\bar{N}_0 = \left\{ \begin{array}{ll}
p^{m-1}+p^{-1}\bar{\eta}(-m_p)\sqrt{p^*}^{m+1}(p-1)                               & \mbox{ if $m$ is odd,} \\
p^{m-1}                                                             & \mbox{ if $m$ is even.}
\end{array}
\right.
\end{eqnarray*}
\end{lemma}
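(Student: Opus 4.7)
The plan is to evaluate $\bar N_0$ by additive-character orthogonality, reducing the count to the exponential sums $S(y,z)$ from Section \ref{sec-pr}. The only subtlety is handling the quadratic-in-$\tr(x)$ factor, which I linearize by a Gauss-sum completion of the square.

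First I would expand
$$
\bar N_0 \;=\; \frac{1}{p}\sum_{x\in\gf(q)}\sum_{y\in\gf(p)}\zeta_p^{y(\tr(x^{p+1})-m_p^{-1}\tr(x)^2)},
$$
isolate the $y=0$ contribution $p^{m-1}$, and focus on $T_y = \sum_{x\in\gf(q)}\zeta_p^{y\tr(x^{p+1})-(y/m_p)\tr(x)^2}$ for $y\in\gf(p)^*$. To handle the factor $\bar\chi_1(-(y/m_p)\tr(x)^2)$ I apply Lemma \ref{lem-32A2} over $\gf(p)$ with $a_2=m_p/(4y)$, $a_1=\tr(x)$, $a_0=0$, which expresses it as $(\bar\eta(m_p/(4y))\sqrt{p^*})^{-1}\sum_{z\in\gf(p)}\bar\chi_1(\tfrac{m_p}{4y}z^2)\chi_1(zx)$ (using $\chi_1(zx)=\bar\chi_1(z\tr(x))$ for $z\in\gf(p)$). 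Swapping the order of summation turns $T_y$ into a $\bar\chi_1(\frac{m_p}{4y}z^2)$-weighted sum of $S(y,z)$'s.

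The crucial simplification is that, since $y,z\in\gf(p)$, the equation $y^p x^{p^2}+yx=-z^p$ collapses to $y(x^{p^2}+x)=-z$ with the explicit solution $x_{y,z}=-z/(2y)\in\gf(p)$, so that $\chi_1(-yx_{y,z}^{p+1})=\bar\chi_1(-m_pz^2/(4y))$, the precise inverse of the weight appearing in $T_y$. For $m$ odd, Lemma \ref{lem-Coulter1} together with Lemma \ref{lem-bothcharac} gives $S(y,z)=\sqrt{p^*}^m\bar\eta(y)\bar\chi_1(-m_pz^2/(4y))$; the two quadratic exponentials cancel, the $z$-sum collapses to $p$, and after using $\bar\eta(m_p/(4y))=\bar\eta(m_p)\bar\eta(y)$ and $p=\bar\eta(-1)p^*$ I recover the pleasantly $y$-free value $T_y=\bar\eta(-m_p)\sqrt{p^*}^{m+1}$; multiplying by $(p-1)/p$ yields the odd-$m$ formula.

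For $m$ even I must choose the correct branch of Lemma \ref{lem-Coulter2}. Since $p^2-1\mid p^m-1$, we have $(p-1)\mid (q-1)/(p+1)$ and hence $y^{(q-1)/(p+1)}=1$ for every $y\in\gf(p)^*$; this places us in Case 1 when $m\equiv 2\pmod 4$ (as $(-1)^{m/2}=-1$) and in Case 2 when $m\equiv 0\pmod 4$. In both branches $S(y,z)=c\,\bar\chi_1(-m_pz^2/(4y))$ with $c$ a real constant independent of $y$, so the same cancellation gives $T_y=(pc/(\bar\eta(m_p)\sqrt{p^*}))\,\bar\eta(y)$; summing $\bar\eta(y)$ over $\gf(p)^*$ returns $0$, leaving $\bar N_0=p^{m-1}$. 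I expect the main obstacle to be the careful bookkeeping of quadratic-character and $\sqrt{p^*}$ signs in the linearization step, and in verifying that both branches of Lemma \ref{lem-Coulter2} produce the same clean factorization $S(y,z)=c\,\bar\chi_1(-m_pz^2/(4y))$ needed for the final character-sum cancellation.
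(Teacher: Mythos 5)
Your proposal is correct and follows essentially the same route as the paper: both expand $\bar{N}_0$ by orthogonality, linearize the $\tr(x)^2$ term through a Gauss-sum/Fourier expansion (your normalized completion-of-the-square is the paper's expansion $\zeta_p^{-X^2/m_p}=\sum_z a_z\zeta_p^{zX}$ in disguise), evaluate $S(y,z)$ via Lemmas \ref{lem-Coulter1} and \ref{lem-Coulter2} using the explicit solution $-z/(2y)$, and exploit the resulting cancellation of the two quadratic exponentials. The only cosmetic difference is that the paper packages the outer $y$-sum via the Galois action $\sigma_y$ and Lemma \ref{lem-A2}, whereas you keep $y$ explicit and finish with $\sum_{y\in\gf(p)^*}\bar{\eta}(y)=0$ --- the same computation in different notation.
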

\begin{proof}
By definition, we have
\begin{eqnarray}\label{eqn-NN0}
\bar{N}_0 &=& \frac{1}{p} \sum_{x \in \gf(q)} \sum_{y \in \gf(p)} \zeta_p^{y(\tr(x^{p+1})-\frac{1}{m_p}\tr(x)^2)} \nonumber \\
&=& p^{m-1}+ \frac{1}{p} \sum_{y \in \gf(p)^*} \sum_{x \in \gf(q)}   \zeta_p^{y(\tr(x^{p+1})-\frac{1}{m_p}\tr(x)^2)}  \nonumber \\
&=& p^{m-1}+ \frac{1}{p}\sum_{y \in \gf(p)^*} \sigma_y(\sum_{x \in \gf(q)}   \zeta_p^{\tr(x^{p+1})-\frac{1}{m_p}\tr(x)^2}).
\end{eqnarray}

Since the
Fourier expansion of $\zeta_p^{-\frac{1}{m_p}X^2}$ can be  expressed as
$$
\zeta_p^{-\frac{1}{m_p}X^2}=\sum_{z\in\gf(p)}a_z \zeta_p^{~zX}
$$
for any $X\in \gf(p)$, we have
\begin{eqnarray}\label{eqn-az}
a_z=\frac{1}{p}\sum_{X\in \gf(p)}\zeta_p^{-\frac{1}{m_p}X^2-zX}
\end{eqnarray}
for any $z\in \gf(p)$ and
\begin{eqnarray*}
\zeta_p^{-\frac{1}{m_p}\tr(x)^2}=\sum_{z\in\gf(p)}a_z \zeta_p^{z\tr(x)}.
\end{eqnarray*}
Therefore, Equation (\ref{eqn-NN0}) becomes
\begin{eqnarray}\label{eqn-NN01}
\bar{N}_0
&=& p^{m-1}+ \frac{1}{p}\sum_{y \in \gf(p)^*} \sigma_y(\sum_{z\in\gf(p)}  a_z \sum_{x \in \gf(q)}\zeta_p^{\tr(x^{p+1})+z\tr(x)}) \nonumber\\
&=& \left\{ \begin{array}{ll}
p^{m-1}+p^{-1}\sum_{y \in \gf(p)^*} \sigma_y( \sqrt{p^*}^m \sum_{z\in\gf(p)} a_z \zeta_p^{-\frac{m_p}{4}z^2})                               & \mbox{ if $m$ is odd ,} \\
p^{m-1}+p^{-1}\sum_{y \in \gf(p)^*} \sigma_y( -p^{\frac{m}{2}} \sum_{z\in\gf(p)} a_z \zeta_p^{-\frac{m_p}{4}z^2})        & \mbox{ if $m\equiv2 (\rm{~mod~}4)$,} \\
p^{m-1}+p^{-1}\sum_{y \in \gf(p)^*} \sigma_y( -p^{\frac{m+1}{2}} \sum_{z\in\gf(p)} a_z \zeta_p^{-\frac{m_p}{4}z^2})            & \mbox{ if $m\equiv0 (\rm{~mod~}4)$.}
\end{array}
\right.
\end{eqnarray}
However, by Equation (\ref{eqn-az}) and Lemma \ref{lem-32A2} we obtain
\begin{eqnarray*}
\sum_{z\in\gf(p)} a_z \zeta_p^{-\frac{m_p}{4}z^2}
&=& \sum_{z\in\gf(p)} \frac{1}{p}\sum_{X\in \gf(p)}\zeta_p^{-\frac{1}{m_p}X^2-zX-\frac{m_p}{4}z^2}\\
&=& \sum_{z\in\gf(p)} \frac{1}{p}\bar{\eta}(-\frac{1}{m_p})G(\bar{\eta},\bar{\chi}_1)\\
&=& \bar{\eta}(-m_p)\sqrt{p^*}.
\end{eqnarray*}
Hence, Equation (\ref{eqn-NN01}) becomes
\begin{eqnarray*}
\bar{N}_0
&=& \left\{ \begin{array}{ll}
p^{m-1}+p^{-1} \sqrt{p^*}^{m+1} \bar{\eta}(-m_p) \sum_{y \in \gf(p)^*} \sigma_y(1)                               & \mbox{ if $m$ is odd ,} \\
p^{m-1}+p^{-1} (-p^{\frac{m}{2}})\bar{\eta}(-m_p)  \sum_{y \in \gf(p)^*} \sigma_y( \sqrt{p^*})        & \mbox{ if $m\equiv2 (\rm{~mod~}4)$,} \\
p^{m-1}+p^{-1}(-p^{\frac{m+1}{2}})\bar{\eta}(-m_p) \sum_{y \in \gf(p)^*} \sigma_y(\sqrt{p^*})            & \mbox{ if $m\equiv0 (\rm{~mod~}4)$.}
\end{array}
\right.
\end{eqnarray*}
The desired conclusions then follow from the fact that
$$
\left\{ \begin{array}{ll}
\sigma_y(1) =1,\\
\sum_{y \in \gf(p)^*} \sigma_y(\sqrt{p^*})=0.
\end{array}
\right.
$$
\end{proof}

\subsection{The proof of Theorems \ref{thm-five},  \ref{thm-three} and \ref{thm-four}}
By definition, the code length of $\C_D$ is $n = |D|=n_0-1$, where $n_0$ was defined in Lemma \ref{lem-32B2}. This means that Equations (\ref{eqn-codelenth1}) and (\ref{eqn-codelenth2}) follow.

For each $b \in \gf(q)^*$, define
\begin{eqnarray}\label{eqn-mcodeword}
\bc_{b}=(\tr(bd_1), \,\tr(bd_2), \,\ldots, \,\tr(bd_n)),
\end{eqnarray}
where $d_1, d_2, \ldots, d_n$ are the elements of $D$.
Then the Hamming weight $\wt(\bc_b)$ of $\bc_b$ is
\begin{eqnarray}\label{eqn-wcb}
\wt(\bc_b)=n_0-N(b),
\end{eqnarray}
where $n_0$ and $N(b)$ were defined before. By Lemmas \ref{lem-32B2} and \ref{lem-32B4}, we have $\wt(\bc_b)=n_0-N(b)>0$ for each $b\in \gf(q)^*$. This means that the code $\C_D$ has $q$ distinct codewords. Hence, the dimension of the code $\C_D$ is $m$.

Next we shall prove the multiplicities $A_{w_i}$ of the
codewords with weight $w_i$ in $\C_D$, and will distinguish the following three cases.

\begin{enumerate}
  \item The case that $m$ is odd.

  It follows from Lemma \ref{lem-Coulter1} that $L(x)=x^{p^2}+x$ is a permutation polynomial over $\gf(q)$. Thus the unique solution $x_b$  of $L(x)=-b^p$ runs through $\gf(q)^*$ when $b$ runs through $\gf(q)^*$.
  \begin{description}
     \item[$\bullet$] When $m_p=0$, the desired conclusion of Table  \ref{tab-five0} follows from Equation (\ref{eqn-wcb}) and Lemmas \ref{lem-32B2}, \ref{lem-32B4}, \ref{lem-F1} and \ref{lem-F2}.
     \item[$\bullet$] When $m_p\neq 0$, we only give the proof for the case $\bar{\eta}(m_p)=1$ and omit the proof for the case $\bar{\eta}(m_p)=-1$ whose proof is similar. Suppose that $m_2=1$ and $\bar{\eta}(m_p)=1$, then from Lemmas \ref{lem-32B2} and \ref{lem-32B4} we obtain
\begin{eqnarray*}\label{eqn-wb}
\wt(\bc_b)
&=&n_0-N(b) \nonumber \\
&=&\left\{ \begin{array}{ll}
B_1     & \mbox{ if $\tr(x_b^{p+1})=0$ and $\tr(x_b)=0$,} \\
B_1 +Bp              & \mbox{ if $\tr(x_b^{p+1})=0$ and $\tr(x_b)\neq 0$,} \\
B_1 +B(p+1)            & \mbox{ if $\tr(x_b^{p+1}) \in \textup{SQ}$ and $\tr(x_b)=0$,}\\
B_1 +B(p-1)              & \mbox{ if $\tr(x_b^{p+1})\in \textup{N\textup{SQ}}$,}\\
B_1 +B ~or~B_1 +B(p+1)    & \mbox{ if $\tr(x_b^{p+1})\in \textup{SQ}$ and $\tr(x_b)\neq 0$,}
\end{array}
\right.
\end{eqnarray*}
where $B_1={p}^{m-2}(p-1)$ and $B=(-1)^{\frac{p-1}{2}\frac{m-1}{2}}p^{\frac{m-3}{2}}$.
Therefore, the weight $\wt(\bc_b)$ satisfies
$$
\wt(\bc_b) \in \{B_1 , B_1 +Bp, B_1 +B(p+1), B_1 +B(p-1), B_1 +B\}
$$
for each $b \in \gf(q)^*$.
Define
\begin{eqnarray*}
&& w_1=B_1, \\
&& w_2=B_1 +Bp, \\
&& w_3=B_1 +B(p+1),\\
&& w_4=B_1 +B(p-1), \\
&& w_5=B_1 +B.
\end{eqnarray*}

We now determine the number $A_{w_i}$ of the codewords with weight $w_i$ in $\C_{D}$. By Lemmas \ref{lem-F1} and \ref{lem-F2}, we can directly determine
\begin{eqnarray}\label{eqn-A124}
\left\{
\begin{array}{lll}
A_{w_1} &=& N_{(0,0)}-1=p^{m-2}-1+(p-1)B, \\
A_{w_2} &=& (N_0-1)-(N_{(0,0)}-1)=B_1-(p-1)B, \\
A_{w_4} &=& \frac{p-1}{2}N_a=\frac{p-1}{2}(p^{m-1}-Bp),
\end{array}
\right.
\end{eqnarray}
where $a$ is a nonsquare in $\gf(p)^*$. Since $0 \not\in D$, the minimum distance $d^{\perp}$ of the dual code $\C_D^{\perp}$ of $\C_D$ cannot be $1$. This means that the minimum weight of the dual code $\C_{D}^\perp$ is at least $2$. The first two Pless Power Moments \cite[p.260]{HP} lead to the following system of equations:
\begin{eqnarray}\label{eqn-wtdsemibentfcode5}
\left\{
\begin{array}{lll}
A_{w_1}+A_{w_2}+A_{w_3}+A_{w_4}+A_{w_5} &=& p^m-1, \\
w_1A_{w_1}+w_2A_{w_2}+w_3A_{w_3}+w_4A_{w_4}+w_5A_{w_5} &=& (p-1)np^{m - 1},
\end{array}
\right.
\end{eqnarray}
where the code length $n=n_0-1=p^{m-1}-1+pB$. Solving the system of equations in (\ref{eqn-wtdsemibentfcode5}) yields
\begin{eqnarray*}
\left\{
\begin{array}{lll}
A_{w_3}&=& \frac{1}{2}(p-1)(p-2)p^{\frac{m-3}{2}}(p^{\frac{m-1}{2}}-(-1)^{\frac{p-1}{2}\frac{m-1}{2}}), \\
A_{w_5}&=& (p-1)p^{m-2}+(-1)^{\frac{p-1}{2}\frac{m-1}{2}}(p-1)^2p^{\frac{m-3}{2}}.
\end{array}
\right.
\end{eqnarray*}
This completes the proof of the weight distribution of Table \ref{tab-five1}.
\end{description}

\item The case that $m \equiv 2 \pmod{4}$.

It follows from Lemma \ref{lem-Coulter2} that $L(x)=x^{p^2}+x$ is a permutation polynomial over $\gf(q)$. Similar to the analysis for the case $m$ being odd, we have the following.
\begin{description}
   \item[$\bullet$] When $m_p=0$, it is clear that the desired conclusion of Table \ref{tab-three0} follows from  Equation (\ref{eqn-wcb}) and Lemmas \ref{lem-32B2}, \ref{lem-32B4}, \ref{lem-F1} and  \ref{lem-F2}.
  \item[$\bullet$]  When $m_p \neq 0$, by Lemmas \ref{lem-32B2} and \ref{lem-32B4} we have
\begin{eqnarray}\label{eqn-wb3}
\wt(\bc_b)
&=&n_0-N(b) \nonumber \\
&=&\left\{ \begin{array}{ll}
(p-1)p^{m-2}   & \mbox{ if $\tr(x_b^{p+1})=0$ and $\tr(x_b)=0$,} \\
(p-1)p^{m-2}+p^{\frac{m}{2}-1}      & \mbox{ if $\tr(x_b^{p+1})=0$ and $\tr(x_b)\neq 0$} \\
~~ & \mbox{or $\tr(x_b^{p+1})\neq 0$ and $A= 0$,}\\
(p-1)p^{m-2} ~or ~ (p-1)p^{m-2}+2p^{\frac{m}{2}-1} & \mbox{ if $\tr(x_b^{p+1})\neq 0$ and $A\neq 0$,}
\end{array}
\right.
\end{eqnarray}
where $A=-\frac{m_p}{4}+\frac{\tr(x_b)^2}{4\tr(x_b^{p+1})}$. Therefore, the weight $\wt(\bc_b)$ satisfies
$$
\wt(\bc_b) \in \{(p-1)p^{m-2},(p-1)p^{m-2}+p^{\frac{m}{2}-1},(p-1)p^{m-2}+2p^{\frac{m}{2}-1}\}
$$
for each $b \in \gf(q)^*$.
Define
\begin{eqnarray*}
&& w_1=(p-1)p^{m-2}, \\
&& w_2=(p-1)p^{m-2}+p^{\frac{m}{2}-1}, \\
&& w_3=(p-1)p^{m-2}+2p^{\frac{m}{2}-1}.
\end{eqnarray*}

We now determine the number $A_{w_i}$ of the codewords with weight $w_i$ in $\C_{D}$. Note that
\begin{eqnarray*}
& & |\{b\in \gf(q): \tr(x_b^{p+1})\neq 0 ~and~ A= 0 \}|\\
& &=|\{b\in \gf(q): A= 0 \}|-|\{b\in \gf(q): \tr(x_b^{p+1})=0 ~and~ A\neq 0 \}| \\
& &=|\{x\in \gf(q): \tr(x^{p+1})-\frac{1}{m_p}\tr(x)^2 =0\}|-|\{x\in \gf(q): \tr(x^{p+1})=0 ~and~ \tr(x)=0 \}|,
\end{eqnarray*}
as $L(x)=x^{p^2}+x$ is a permutation polynomial over $\gf(q)$ for this case $m\equiv 2 \pmod{4}$.
Thus from Equation (\ref{eqn-wb3}) and Lemmas \ref{lem-F1}, \ref{lem-F2} and \ref{lem-F3}, we can directly obtain
\begin{eqnarray*}
A_{w_2}=(N_0-N_{(0,0)})+(\bar{N}_0-N_{(0,0)})=(p-1)(2p^{m-2}-p^{\frac{m}{2}-1}).
\end{eqnarray*}
Since the minimum weight of the dual code $\C_{D}^\perp$ is at least $2$, from the first two Pless Power Moments \cite[p.260]{HP} we can compute $A_{w_1}$ and $A_{w_3}$. This completes the proof of the weight distribution of Table \ref{tab-three1}.
\end{description}

\item The case that $m \equiv 0 \pmod{4}$.

\begin{description}
\item[$\bullet$]
If $m_p=0$, then by Lemmas \ref{lem-32B2} and \ref{lem-32B4} we have
\begin{eqnarray*}\label{eqn-wb1}
\wt(\bc_b)&=&n_0-N(b)\\
&=&\left\{ \begin{array}{ll}
B_1-(p-1)^2p^{\frac{m}{2}-1}     & \mbox{ if $b\notin \textup{Im}(L)$,} \\
B_1    & \mbox{ if $b\in \textup{Im}(L)$, $\tr(x_b^{p+1})=0$ and $\tr(x_b)=0$,} \\
B_1-(p-1)p^{\frac{m}{2}}       & \mbox{ if $b\in \textup{Im}(L)$, $\tr(x_b^{p+1})=0$ and $\tr(x_b)\neq 0$} \\
~~       & \mbox{~~~~~~~~~~~~~ or $\tr(x_b^{p+1})\neq 0$ and $\tr(x_b)= 0$,} \\
B_1-(p-2)p^{\frac{m}{2}} & \mbox{ if $b\in \textup{Im}(L)$, $\tr(x_b^{p+1})\neq 0$ and $\tr(x_b)\neq 0$,}
\end{array}
\right.
\end{eqnarray*}
where $B_1=p^{m-2}(p-1)$.

Recall that $L(x)=x^{p^2}+x$. If $b\in \textup{Im}(L)$, then this means that $L(x)=-b^p$ is solvable. We point out the following facts:
\begin{itemize}
  \item If $m\equiv 0 \pmod{4}$, then $L(x)$ is not a permutation polynomial over $\gf(q)$. However, for any $b\in \gf(q)$, if $L(x)=-b^p$ is solvable in this case $m\equiv 0 \pmod{4}$, then it has $p^2$ solutions.
  \item If $m\equiv 0 \pmod{4}$, then the number of $b$ is $p^{m-2}$ such that $L(x)=-b^p$ is solvable when $b$ runs through $\gf(q)$.
\end{itemize}
Therefore, we have
\begin{eqnarray*}
\wt(\bc_b)=\left\{ \begin{array}{ll}
p^{m-2}(p-1)-(p-1)^2p^{\frac{m}{2}-1}                               & \mbox{ occurs $p^m-p^{m-2}$ times,} \\
p^{m-2}(p-1)                                                        & \mbox{ occurs $p^{-2}N_{(0,0)}-1$ times,} \\
(p-1)p^{\frac{m}{2}}(p^{\frac{m}{2}-2}-1)                               & \mbox{ occurs $(N_0+p^{m-1}-2N_{(0,0)})p^{-2}$ times,} \\
(p-1)p^{m-2}-(p-2)p^{\frac{m}{2}}                              & \mbox{ occurs $(p^m-p^{m-1}-N_0+N_{(0,0)})p^{-2}$ times,}
\end{array}
\right.
\end{eqnarray*}
when $b$ runs through $\gf(q)^*$, where $N_0$ and $N_{(0,0)}$ were defined before. The desired conclusion of Table \ref{tab-four0} then follows from Lemmas \ref{lem-F1} and \ref{lem-F2}.

\item[$\bullet$] If $m_p\neq 0$, then from Lemmas \ref{lem-32B2} and \ref{lem-32B4} we get
\begin{eqnarray} \label{eqn-w13}
\wt(\bc_b)
&=&n_0-N(b)  \nonumber \\
&=&\left\{ \begin{array}{ll}
(p-1)p^{\frac{m}{2}-1}+B_1     & \mbox{ if $b\notin \textup{Im}(L)$,} \\
B_1    & \mbox{ if $b\in \textup{Im}(L)$, $\tr(x_b^{p+1})=0$ and $\tr(x_b)=0$,} \\
B_1+p^{\frac{m}{2}}      & \mbox{ if $b\in \textup{Im}(L)$, $\tr(x_b^{p+1})=0$ and $\tr(x_b)\neq 0$} \\
~~       & \mbox{~~~~~~~~~~~~~ or $\tr(x_b^{p+1})\neq 0$ and $A=0$,} \\
B_1 ~or ~ B_1+2p^{\frac{m}{2}}  & \mbox{ if $b\in \textup{Im}(L)$, $\tr(x_b^{p+1})\neq 0$ and $A\neq 0$,}
\end{array}
\right.
\end{eqnarray}
where $B_1=p^{m-2}(p-1)$.
Therefore, the weight $\wt(\bc_b)$ satisfies
$$
\wt(\bc_b) \in \{(p-1)p^{\frac{m}{2}-1}+B_1,~B_1,~B_1 +p^{\frac{m}{2}},~B_1 +2 p^{\frac{m}{2}}\}
$$
for each $b \in \gf(q)^*$.
Define
\begin{eqnarray*}
&& w_1=(p-1)p^{\frac{m}{2}-1}+B_1, \\
&& w_2=B_1, \\
&& w_3=B_1 +p^{\frac{m}{2}},\\
&& w_4=B_1 +2p^{\frac{m}{2}}.
\end{eqnarray*}

We now determine the number $A_{w_i}$ of the codewords with weight $w_i$ in $\C_{D}$. It is clear that
$$
A_{w_1}=p^m-p^{m-2}.
$$
By Equation (\ref{eqn-w13}) and Lemmas \ref{lem-F1}, \ref{lem-F2} and \ref{lem-F3}, we can directly obtain
\begin{eqnarray*}
\begin{array}{lll}
A_{w_3} &=& p^{-2}(N_0-N_{(0,0)}+\bar{N}_0),\\
&=&(p-1)(2p^{m-4}-p^{\frac{m}{2}-2}).
\end{array}
\end{eqnarray*}
Since the minimum weight of the dual code $\C_{D}^\perp$ is at least $2$, the first two Pless Power Moments \cite[p.260]{HP} lead to the following system of equations:
\begin{eqnarray}\label{eqn-wtdsemibentfcode6}
\left\{
\begin{array}{lll}
A_{w_1}+A_{w_2}+A_{w_3}+A_{w_4}&=& p^m-1, \\
w_1A_{w_1}+w_2A_{w_2}+w_3A_{w_3}+w_4A_{w_4} &=& (p-1)np^{m - 1},
\end{array}
\right.
\end{eqnarray}
where $n=n_0-1=p^{m-1}+p^{\frac{m}{2}}-1$. Solving the system of equations in (\ref{eqn-wtdsemibentfcode6}) gives
\begin{eqnarray*}
\left\{
\begin{array}{lll}
A_{w_2}&=& p^{m-4}+\frac{p-1}{2}(p^{\frac{m}{2}-1}+p^{m-3})-1, \\
A_{w_4}&=& \frac{1}{2}(p-1)(p-2)(p^{m-4}-p^{\frac{m}{2}-2}).
\end{array}
\right.
\end{eqnarray*}
This completes the proof of the weight distribution of Table \ref{tab-four1}.
\end{description}
\end{enumerate}

Summarizing all the conclusions above completes the proofs of Theorems \ref{thm-five}, \ref{thm-three} and \ref{thm-four}.

\section{Concluding remarks}\label{sec-concluding}

In this paper, we presented a class of linear codes $\C_D$ with a few weights and completely determined their weight distributions. The result showed that they have at most five weights. Particularly, the codes $\C_D$ presented in this paper have three weights when $m \geq 6$ and $m\equiv 2 \pmod{4}$. Many classes of linear codes with a few weights were constructed (see, for example, \cite{DingDing2,ZLFH2015,TLQZH2015,CG84,FL07,LiYueLi,CDY05,ZD14,DLLZ}). When $m$ is odd, the code $\C_D$ presented in this paper has the same parameter and weight distribution as that of $\C_D$ presented in \cite{LWL2015}. But they have different defining set $D$.

Any linear code over $\gf(p)$ can
be used to obtain secret sharing schemes \cite{CDY05,YD06}. In order to
obtain secret sharing schemes with interesting access structures, one would like to have linear codes $\C$ such that
$w_{\min}/w_{\max} > \frac{p-1}{p}$ \cite{YD06}, where $w_{\min}$ and $w_{\max}$ denote the minimum and maximum
nonzero weight of the linear code.

When $m \geq 5$ is odd and $m_p=0$, the code $\C_D$ of Section \ref{sec-main} satisfies that
\begin{eqnarray*}
\frac{w_{\min}}{w_{\max}} = \frac{p^{m-2}-p^{(m-3)/2}}{p^{m-2}+p^{(m-3)/2}} > \frac{p-1}{p}.
\end{eqnarray*}

When $m \geq 5$ is odd and $m_p \neq 0$, the code $\C_D$ of Section \ref{sec-main} satisfies that
\begin{eqnarray*}
\frac{w_{\min}}{w_{\max}} = \frac{(p-1)p^{m-2}-(p+1)p^{(m-3)/2}}{(p-1)p^{m-2}+(p+1)p^{(m-3)/2}} > \frac{p-1}{p}.
\end{eqnarray*}

When $m \geq 6$ , $m\equiv 2 \pmod{4}$ and $m_p=0$, the code $\C_D$ of Section \ref{sec-main} satisfies that
\begin{eqnarray*}
\frac{w_{\min}}{w_{\max}} = \frac{p^{m-2}-p^{m/2-1}}{p^{m-2}} > \frac{p-1}{p}.
\end{eqnarray*}

When $m \geq 6$ , $m\equiv 2 \pmod{4}$ and $m_p \neq 0$, the code $\C_D$ of Section \ref{sec-main} satisfies that
\begin{eqnarray*}
\frac{w_{\min}}{w_{\max}} = \frac{(p-1)p^{m-2}}{(p-1)p^{m-2}+2p^{m/2-1}} > \frac{p-1}{p}.
\end{eqnarray*}

When $m \geq 6$ , $m\equiv 0 \pmod{4}$ and $m_p=0$, the code $\C_D$ of Section \ref{sec-main} satisfies that
\begin{eqnarray*}
\frac{w_{\min}}{w_{\max}} = \frac{p^{m-2}-p^{m/2}}{p^{m-2}} > \frac{p-1}{p}.
\end{eqnarray*}

When $m \geq 6$ , $m\equiv 0 \pmod{4}$ and $m_p \neq 0$, the code $\C_D$ of Section \ref{sec-main} satisfies that
\begin{eqnarray*}
\frac{w_{\min}}{w_{\max}} = \frac{(p-1)p^{m-2}}{(p-1)p^{m-2}+2p^{m/2}} > \frac{p-1}{p}.
\end{eqnarray*}

Hence, the linear codes $\C_D$ presented in this paper satisfy the condition that
$w_{\min}/w_{\max} > \frac{p-1}{p}$, and can be employed to obtain secret sharing schemes
with interesting access structures \cite{ADHK,CDY05,DingDing2,YD06}.

%\section*{Acknowledgements}
%The authors are very grateful to the reviewers and the Associate Editor, Dr. $\cdots \cdots$ for their comments and suggestions
%that improved the presentation and quality of this paper.

\end{document}